\def\RR{{\mathbb{R}}}
\numberwithin{equation}{section}
\newcommand{\Conv}{\mathop{\rm Conv} }
\newtheorem{Thm}{Theorem}[section]
\newtheorem{Prop}[Thm]{Proposition}
\newtheorem{Lem}[Thm]{Lemma}
\newtheorem{Cor}[Thm]{Corollary}
\theoremstyle{definition}
\newtheorem{Rem}[Thm]{Remark}
\title{Finding Hall blockers by matrix scaling}
\author{Koyo Hayashi\footnote{
			Department of Computer Science, Graduate School of Information Science
		and Technology, University of Tokyo,
		Hongo 7-3-1, Bunkyo-ku, Tokyo 113-8656, Japan.	
		\texttt{khayashi@is.s.u-tokyo.ac.jp}} \quad \quad
	Hiroshi Hirai\footnote{Corresponding author,  
		Graduate School of Mathematics,  
		Nagoya University, 
		Furocho, Chikusaku, Nagoya, 464-8602, Japan.
		\texttt{hirai.hiroshi@math.nagoya-u.ac.jp}} \quad \quad
	Keiya Sakabe\footnote{
	Department of Mathematical Informatics, 
	Graduate School of Information Science
	and Technology, University of Tokyo,
		Hongo 7-3-1, Bunkyo-ku, Tokyo 113-8656, Japan.
		\texttt{ksakabe@g.ecc.u-tokyo.ac.jp}}
}
\begin{document}
\maketitle
\begin{abstract}
	For a given nonnegative matrix $A=(A_{ij})$,  
	the matrix scaling problem asks whether $A$ can be scaled to a doubly stochastic matrix $D_1AD_2$ for some positive diagonal matrices $D_1,D_2$.
	The Sinkhorn algorithm is a simple iterative algorithm,  
	which repeats row-normalization $A_{ij} \leftarrow A_{ij}/\sum_{j}A_{ij}$ and column-normalization $A_{ij} \leftarrow A_{ij}/\sum_{i}A_{ij}$ alternatively.
	By this algorithm, $A$ converges to a doubly stochastic matrix in limit if and only if the bipartite graph associated with $A$ has a perfect matching.
	This property can decide the existence of a perfect matching in a given bipartite graph $G$, which is identified with the $0,1$-matrix $A_G$.
	Linial, Samorodnitsky, and Wigderson showed that 
	$O(n^2 \log n)$ iterations for $A_G$
	decide whether $G$ has a perfect matching.
	Here $n$ is the number of vertices in one of the color classes of $G$.
	
	In this paper, we show an extension of this result:
	If $G$ has no perfect matching, 
	then a polynomial number of the Sinkhorn iterations 
	identifies a Hall blocker---a vertex subset $X$ 
	having neighbors $\Gamma(X)$ with $|X| > |\Gamma(X)|$, which is
	a certificate of the nonexistence of a perfect matching.
	Specifically, we show that $O(n^2 \log n)$ iterations can identify one Hall blocker, 
	and that further polynomial iterations can also identify all parametric Hall blockers 
	$X$ of maximizing $(1-\lambda) |X| - \lambda |\Gamma(X)|$ for $\lambda \in [0,1]$.
	The former result is based on an interpretation of the Sinkhorn algorithm as 
	alternating minimization for geometric programming.    
	The latter is on an interpretation as  
	alternating minimization for KL-divergence
	(Csisz\'{a}r and Tusn\'{a}dy 1984,  Gietl and Reffel 2013)
	and its limiting behavior for a nonscalable matrix (Aas 2014).
	We also relate the Sinkhorn limit with
	parametric network flow, principal partition of polymatroids, 
	and the Dulmage-Mendelsohn decomposition of a bipartite graph.
\end{abstract}

{\rm Keywords:} Matrix scaling, Sinkhorn algorithm, perfect matching, Hall blockers, alternating minimization, KL-divergence, polymatroids 

\section{Introduction}
For a given nonnegative matrix $A = (A_{ij})$,  
the {\em (doubly stochastic) matrix scaling problem}~\cite{Sinkhorn64} 
asks whether $A$ can be scaled to a doubly stochastic matrix $D_1AD_2$ 
for some positive diagonal matrices $D_1,D_2$.
The {\em Sinkhorn algorithm} (also called {\em RAS algorithm} or {\em IFP procedure}) is a simple iterative algorithm,  
which repeats row normalization 
$A_{ij} \leftarrow A_{ij}/\sum_{j}A_{ij}$ and 
column normalization $A_{ij} \leftarrow A_{ij}/\sum_{i} A_{ij}$ alternatively.
See Idel's survey~\cite{Idel_survey}
for rich literature of matrix scaling and the Sinkhorn algorithm.
By this algorithm, $A$ converges to a doubly stochastic matrix in limit
if and only if 
the bipartite graph $G$ associated with $A$ has a perfect matching~\cite{SinkhornKnopp67}, where 
$G$ is defined as a bipartite graph having row and column indices of $A$ as vertices
and having edges $ij$ if $A_{ij} > 0$.
This convergence property can decide whether a given bipartite graph 
$G$ has a perfect matching. Consider 0,1-matrix 
$A_G$ defined by $(A_G)_{ij} := 1$ for $ij \in E$ and zero for other indices.
Linial, Samorodnitsky, and Wigderson~\cite{LGW00} showed that
a polynomial number of iterations of the Sinkhorn algorithm applied to $A = A_G$ 
decides whether $G$ has a perfect matching.
The same result is independently obtained by 
Gurvits and Yianilos~\cite{GurvitsYianilos1998}.
Specifically,  within $O(n^2 \log n)$ iterations,
if $\| {\bf 1} - A{\bf 1}\|_2^2 < 1/n$ holds after column normalization, 
then there is a perfect matching.
Otherwise, there is no perfect matching.
Here ${\bf 1}$ denotes the all-one vector, and
$n$ is the number of vertices in one of the color classes of $G$.
Although the time complexity of this algorithm is inferior 
than the standard augmenting path algorithm, 
it is interesting in its simplicity of the algorithm description
as well as its conceptual difference from the standard one.
Also it is linked with the recent development on operator scaling, which is mentioned in the last of introduction.

In this paper, we address an extension of this result.
Hall's marriage theorem says that $G$ has a perfect matching if and only 
if there is no {\em Hall blocker}---a subset $X$ of vertices in one color class
having neighbors $\Gamma(X)$ with $|X| > |\Gamma(X)|$.
The above algorithm outputs neither a perfect matching nor a Hall blocker.  
The main result of this paper is to show that 
a polynomial number of the Sinkhorn iterations
can identify Hall blockers if $G$ has no perfect matching.

The first result is that $O(n^2 \log n)$ iterations can identify 
a Hall blocker if $G$ has no perfect matching.  
The analysis is based on an interpretation of the Sinkhorn algorithm as 
an alternating minimization for {\em capacity} 
$\inf_{x_1,x_2,\ldots,x_n > 0} \prod_{i}\sum_j A_{ij}x_j/(x_1x_2\ldots x_n)$, 
as in the analysis of the above algorithm~\cite{GurvitsYianilos1998,LGW00}.
If $G$ has no perfect matching,
then the convex optimization ({\em geometric programming}) formulation of 
the logarithmic capacity has unbounded optimal value $-\infty$.
By a simple criterion for unbounded geometric programs,
together with decrement analysis of one Sinkhorn iteration, 
we show that $O(n^2 \log n)$ iterations make scaling matrices $D_1,D_2$ satisfy the unboundedness criterion.
Then,  by sorting diagonals of $D_1,D_2$, we obtain a Hall blocker. 
This recovering procedure of a Hall blocker from $D_1,D_2$
is given by Franks, Soma, Goemans~\cite{FGS2022} in 
a warming-up argument for general setting of operator scaling.  
The new point here is the iteration bound, where we use a capacity lower bound specific to matrix scaling, 
not amenable to operator scaling.
The obtained Hall blocker $X$ has no guarantee that it maximizes violation $|X| - |\Gamma(X)|$. 

The second result is that further polynomial iterations
can identify a Hall blocker $X$ of maximum violation $|X| - |\Gamma(X)|$, 
more precisely, parametric Hall blockers of maximizing $(1-\lambda)|X| - \lambda |\Gamma(X)|$ for $\lambda \in [0,1]$.
The analysis is based on another alternating minimization interpretation of the Sinkhorn algorithm.
That is, it is also viewed as alternating minimization for
the minimum KL-divergence between the row-normalized and column-normalized spaces of matrices.
This alternating minimization formulation and its convergence property
can be analyzed via the information-geometric framework by Csisz\'{a}r and Tusn\'{a}dy~\cite{CsiszarTusnady1984}.
This fact was pointed out by Gietl and Reffel~\cite{GietlReffel}.
When $A$ is not scalable, 
the Sinkhorn algorithm does not converge.
However, it becomes oscillating between a pair of matrices attaining 
the minimum KL-divergence~\cite{GietlReffel}.
Then Aas~\cite{Aas2014} revealed a block-diagonalized structure of the oscillating limit, 
and gave a polynomial time procedure to determine to this block structure.

It turns out that this block structure includes information of parametric Hall blockers.
We present refined analysis on the oscillating limit 
according to the idea of {\em principal partition of polymatroids}~\cite{Fujishige2009,TomizawaFujishige}.
Particularly, we reveal that the block structure of the limit matrix coincides with  
a refined version of the {\em Dulmage-Mendelsohn decomposition}~\cite{DM1},
and is obtained via parametric network flow. 
Also we give an explicit formula of the limit of 
the row-marginal $A^*{\bf 1}$ of the column-normalized limit $A^*$.
The block-diagonal form is obtained by arranging $A$ 
with respect to the decreasing order of $A^*{\bf 1}$.
We provide an estimate of convergence to $A^*{\bf 1}$.
After a polynomial number of Sinkhorn iterations, 
the row-marginal $A{\bf 1}$ after column-normalization is sufficiently close to $A{\bf 1}^*$.
By sorting marginal $A{\bf 1}$, we can identify Hall blockers $X$ with maximum 
$(1-\lambda) |X| -  \lambda|\Gamma(X)|$ for $\lambda \in [0,1]$.
They include a Hall blocker with maximum violation $|X| - |\Gamma(X)|$.
The estimated number $O(n^6 \log n)$ of required iterations 
has a large gap to the above $O(n^2 \log n)$.
This may be caused from our primitive convergence analysis. 
The convergence rate of the Sinkhorn algorithm for scalable case  is well-studied; see \cite[Section 7]{Idel_survey}.
We could not find any existing work on convergence analysis 
of the Sinkhorn algorithm for nonscalable case. 
We expect that further analysis 
will improve the iteration bound, and close the gap.

It is an interesting future research to extend 
the present results to {\em operator scaling}~\cite{Franks2018, GGOW,OperatorScaling_survey,Gurvits04}---
a quantum generalization of matrix scaling to completely positive operators.
The Sinkhorn algorithm is generalized to the {\em operator Sinkhorn algorithm} 
(or {\em Gurvits algorithm})~\cite{Gurvits04}.
A recent important result~\cite{GGOW} is that this algorithm can decide 
the (doubly stochastic) scalability in polynomial time, 
which is viewed as a generalization of 
the above-mentioned perfect matching testing by matrix scaling. 
A Hall blocker, a certificate for nonscalability in this setting, 
is a certain vector subspace (called a {\em shrunk subspace}).
It is a natural question whether 
the operator Sinkhorn algorithm can find a Hall blocker for nonscalable case. 
This may need to characterize the limits of the operator Sinkhorn algorithm, 
which is raised as an open problem in \cite{OperatorScaling_survey}.
Although the operator Sinkhorn algorithm is 
interpreted as alternating minimization for the capacity in this setting,      
a divergence minimization interpretation is unknown; 
see \cite{MatsudaSoma2020} for a progress in this direction.
We hope that our results will be useful for tackling these problems. 

After we finished the first version of this paper, 
Franks, Soma, and Goemans~\cite{FGS2022} succeeded,  with avoiding these problems, to
modify
the operator Sinkhorn algorithm so that 
it obtains a shrunk subspace in polynomial time.
In \cite[Section 3.1]{FGS2022}, they presented the matrix scaling specialization of 
their modified algorithm, and gave a procedure of constructing a Hall blocker 
from their modified geometric programming formulation.
In the revision, 
we examine and adapt their argument for the original formulation, and
obtained the $O(n^2 \log n)$ bound by additional analysis of iteration complexity.

\paragraph{Organization.}
In Section~\ref{sec:matrixscaling}, 
we summarize basic facts on matrix scaling with emphasis on nonscalable case.
In Section~\ref{sec:limit}, 
we provide a polymatroid-based analysis on the limit of the Sinkhorn algorithm.
In Section~\ref{sec:finding},
we present our algorithms and prove the correctness.

\paragraph{Notation.}
For an positive integer $n$, let $[n] := \{1,2,\ldots,n\}$.
Let $\RR$ denote the set of real numbers.
For a vector $p \in \RR^n$ and subset $X \subseteq [n]$, let $p(X) := \sum_{i \in X}p_i$.
For a subset $X \subseteq [n]$, 
let ${\bf 1}_{X} \in \RR^n$ denote 
the 0,1-vector defined by $({\bf 1}_{X})_i=1 \Leftrightarrow i \in X$.
The all one vector ${\bf 1}_{[n]}$ is simply denoted by ${\bf 1}$.

For an $n \times m$ matrix $A  = (A_{ij}) \in \RR^{n \times m}$, 
$I \subseteq [n]$, and $J \subseteq [m]$, 
let $A[I,J]$ denote the submatrix of 
$A$ consisting of $A_{ij}$ for $i \in I, j \in J$.
If $A$ is a block diagonal matrix with diagonals $D_1,D_2,\ldots,D_k$, 
then $A$ is written as $A = \bigoplus_{i \in [k]} D_i$.

For two nonnegative vectors $p,q \in \RR^n$,
let $D(p \| q)$ denote the KL-divergence between $p$ and $q$:
\[
D(p \| q) := \sum_{i=1}^n p_i \log p_i/q_i,
\]
where $0 \log 0/x := 0$ for $x \geq 0$ and $x \log x/0 := \infty$ for $x > 0$.
In this paper, we allow $p,q$ to have different sum $p([n]) \neq q([n])$; so $D(p \| q)$ can be negative for such a case.

When $p([n]) = q([n])$, 
the KL-divergence is nonnegative, and relates to the $\ell_1$-distance via {\em Pinsker's inequality}: 
\begin{Lem}[{Pinsker's inequality; see~\cite[Lemma 11.6.1]{InformationTheory}}]\label{lem:Pinsker}
	For nonnegative vectors $p,q \in \RR^n$ with $p([n]) = q([n])$
\begin{equation}\label{eqn:Pinsker}
D(p \| q) \geq \frac{1}{2 p([n])} \|p-q\|_1^2. 
\end{equation}
\end{Lem}
Another useful property of the KL-divergence is Pythagorean theorem:
\begin{Lem}[{Pythagorean theorem; see~\cite[Theorem 11.6.1]{InformationTheory}}]\label{lem:Pythagorean}
Let $P$ be a closed (compact) convex set in $\{ p \in \RR^n \mid 
{\bf 1}^{\top} p = C, p_i \geq 0\ (i \in [n])\}$ for some $C \geq 0$.
For a nonnegative vector $r \in \RR^n$, let $p^* \in P$ satisfy $D(p^* \| r) = \inf_{p \in P} D(p \| r)$. Then it holds 
\[
D(p \| p^*) + D(p^* \| r) \leq D(p \| r) \quad (\forall p \in P).	
\]
\end{Lem}
Note that this is true for the case of  $C \neq r([n])$, since
$D(p \| r) = D(p \| (C/r([n]))r  ) - C \log C/r([n])$ for $p \in P$.


\section{Matrix scaling}\label{sec:matrixscaling}

In this section, we summarize basic results on matrix scaling 
with emphasis on nonscalable case.
Let $A =(A_{ij})$ be an $n \times m$ nonnegative matrix.
We assume that $A$ has neither zero rows nor  zero columns.
A {\em scaling} of $A$
is a matrix  $\tilde A = (\tilde A_{ij})$ written as 
$\tilde A_{ij} = A_{ij} x_i y_j$ for some positive vectors ({\em scaling vectors})
$x\in \RR^n, y \in \RR^{m}$.

Let $r$ and $c$ be $n$-dimensional and $m$-dimensional positive vectors, respectively.
An {\em $(r,c)$-scaling} of $A$ 
is a scaling $\tilde A$ of $A$ 
such that $\tilde{A}{\bf 1} = r$ and $\tilde{A}^{\top}{\bf 1} = c$.
The {\em matrix scaling problem} asks an $(r,c)$-scaling of $A$.
If an $(r,c)$-scaling of $A$ exists, then $A$ is said to be {\em $(r,c)$-scalable.}
If there is a scaling $\tilde A$ of $A$ such that 
$(\tilde{A}{\bf 1}, \tilde{A}^{\top}{\bf 1})$ is arbitrarily close $(r,c)$ 
(under some norm of $\RR^{n+m}$), then $A$ is {\em approximately $(r,c)$-scalable,}
where ``nonscalable" is meant as ``not approximately $(r,c)$-scalable."

Let $R := r([n])$ and $C := c([m])$. 
We do not assume $R=C$, which is an obvious necessary condition for scalability.
The case of $(r,c) = ({\bf 1}, {\bf 1})$ corresponds to the doubly stochastic scaling mentioned in introduction.

We explain combinatorial characterizations 
of scalability by Rothblum and Schneider~\cite{RothblumSchneider}.
Let $G(A) = ([n] \sqcup [m], E)$ denote the bipartite graph representing the nonzero pattern of $A$. 
Specifically, vertices $i \in [n]$ and $j \in [m]$ 
are joined by an edge $ij$ in $G(A)$
if and only if $A_{ij} > 0$. 
The notation $X \sqcup Y$ is meant as 
the disjoint union of $X \subseteq [n]$ and $Y \subseteq [m]$. 
A subset $X \sqcup Y$ of vertices is said to be {\em stable} 
if there is no edge $ij$ with $i\in X$ and $j \in Y$.
Let ${\cal S}$ denote the set of all stable set.
\begin{Thm}[\cite{RothblumSchneider}]\label{thm:RS}
	\begin{itemize}
		\item[(1)] $A$ is approximately $(r,c)$-scalable if and only if  $R=C$ and $r(X) + c(Y) \leq C$ for every $X \sqcup Y \in {\cal S}$.
		\item[(2)]  $A$ is $(r,c)$-scalable if and only if $R=C$, $r(X) + c(Y) \leq C$ for every $X \sqcup Y \in {\cal S}$, and 
		the equality $r(X) + c(Y) = C$ for $X \sqcup Y \in {\cal S}$ 
		implies $A[[n] \setminus X, [m] \setminus Y] = O$. 
		\item[(3)] An $(r,c)$-scaling of $A$ is unique if it exists. 
	\end{itemize} 
\end{Thm}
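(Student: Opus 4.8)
The plan is to pass to the log-coordinate convex potential and read off all three parts by elementary convex analysis. Put $a:=\log x\in\RR^{n}$, $b:=\log y\in\RR^{m}$ and
\[
 f(a,b)\ :=\ \sum_{ij\in E}A_{ij}\,e^{a_i+b_j}\ -\ \sum_{i\in[n]}r_i a_i\ -\ \sum_{j\in[m]}c_j b_j ,
\]
a smooth convex function on $\RR^{n}\times\RR^{m}$ (the exponentials are convex, the rest linear); up to sign and a constant it is the dual of the divergence program $\min\{\,\sum_{ij}P_{ij}\log(P_{ij}/A_{ij}):P{\bf 1}=r,\ P^{\top}{\bf 1}=c,\ P\ge0\,\}$, which ties the argument to the KL-divergence viewpoint of this section. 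Writing $\tilde A=(A_{ij}e^{a_i+b_j})$ one has $\nabla f(a,b)=(\tilde A{\bf 1}-r,\ \tilde A^{\top}{\bf 1}-c)$, so $A$ is $(r,c)$-scalable iff $f$ has a critical point iff (by convexity) $f$ attains its minimum, and $A$ is approximately $(r,c)$-scalable iff there are $(a^k,b^k)$ with $\nabla f(a^k,b^k)\to0$; the latter follows from $\inf f>-\infty$ by a Tikhonov argument, since the minimizer $z_\varepsilon$ of the coercive function $z\mapsto f(z)+\varepsilon\|z-z_0\|^2$ satisfies $\|\nabla f(z_\varepsilon)\|=2\varepsilon\|z_\varepsilon-z_0\|\le2\sqrt{\varepsilon(f(z_0)-\inf f)}\to0$. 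Thus the ``if'' parts of (1) and (2) reduce to showing, respectively, that $R=C$ with the stable-set inequalities forces $\inf f>-\infty$, and that the extra hypothesis of (2) forces the recession cone of $f$ to equal its lineality space. The ``only if'' parts are short limiting arguments: for scalings $\tilde A^{(k)}$ with marginals tending to $(r,c)$, summing all entries gives $R=C$, and for $X\sqcup Y\in{\cal S}$ one has $\tilde A^{(k)}_{ij}=0$ when $i\in X,\,j\in Y$, so
\[
 r(X)\ =\ \lim_{k}\sum_{i\in X,\,j\notin Y}\tilde A^{(k)}_{ij}\ \le\ \lim_{k}\sum_{i\in[n],\,j\notin Y}\tilde A^{(k)}_{ij}\ =\ C-c(Y);
\]
applying the same chain to an \emph{exact} scaling with $r(X)+c(Y)=C$ makes it an equality throughout, forcing $\tilde A_{ij}=0$ and hence $A_{ij}=0$ for all $i\notin X,\,j\notin Y$, i.e.\ $A[[n]\setminus X,[m]\setminus Y]=O$, which is the extra clause of part (2).

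To obtain $\inf f>-\infty$ I would compute the recession function $f^{\infty}$: along a ray $(a+tu,b+tv)$ the value stays bounded above only if $u_i+v_j\le0$ for every $ij\in E$, and then equals a bounded quantity minus $t(r\cdot u+c\cdot v)$, so $f^{\infty}(u,v)=-(r\cdot u+c\cdot v)$ in that case and $+\infty$ otherwise; hence $\inf f>-\infty$ iff ``$u_i+v_j\le0$ on $E$'' implies ``$r\cdot u+c\cdot v\le0$''. Fix such $(u,v)$, and for $t\in\RR$ set $X_t:=\{i:u_i>t\}$ and $Y_t:=\{j:v_j\ge-t\}$; if $ij\in E$ with $i\in X_t$ and $j\in Y_t$ then $u_i+v_j>0$, impossible, so $X_t\sqcup Y_t\in{\cal S}$ and $r(X_t)+c(Y_t)\le C$. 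A layer-cake computation for $r\cdot u$ and for $c\cdot v$ (with the substitution $t\mapsto-t$ in the latter), together with $R=C$, then gives
\[
 r\cdot u+c\cdot v\ =\ \int_{-\infty}^{\infty}\bigl(r(X_t)+c(Y_t)-C\bigr)\,dt\ \le\ 0,
\]
the integrand being compactly supported; this proves the ``if'' part of (1).

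The same identity shows that $f^{\infty}(u,v)=0$ exactly when $u_i+v_j\le0$ on $E$ and $r(X_t)+c(Y_t)=C$ for almost all $t$, equivalently (since $X_\bullet,Y_\bullet$ change only at the finitely many values $\{u_i\}\cup\{-v_j\}$) for every non-jump $t$; moreover $(u,v)$ lies in the lineality space of $f$ iff in addition $u_i+v_j\ge0$ on $E$. So for the ``if'' part of (2) it suffices to show that the extra hypothesis forces $u_i+v_j=0$ for all $ij\in E$ whenever $f^{\infty}(u,v)=0$: if instead $u_i+v_j<0$ for some $ij\in E$, choose a non-jump $t\in(u_i,-v_j)$; then $X_t\sqcup Y_t\in{\cal S}$ with $r(X_t)+c(Y_t)=C$ while $i\notin X_t$ and $j\notin Y_t$, so the hypothesis gives $A[[n]\setminus X_t,[m]\setminus Y_t]=O$, hence $A_{ij}=0$, contradicting $ij\in E$. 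Therefore the recession cone of $f$ equals its lineality space, $f$ attains its minimum, and $A$ is $(r,c)$-scalable. For part (3): every $(r,c)$-scaling $\tilde A=(A_{ij}e^{a_i+b_j})$ arises from a critical---hence globally minimizing---point $(a,b)$ of $f$, and if $(a',b')$ is another minimizer then $f$ is affine on the segment joining them, which by strict convexity of a sum of exponentials forces $(a'_i-a_i)+(b'_j-b_j)=0$ for every $ij\in E$ (this is precisely the kernel of the Hessian quadratic form $(u,v)\mapsto\sum_{ij\in E}\tilde A_{ij}(u_i+v_j)^2$); hence $e^{a'_i+b'_j}=e^{a_i+b_j}$ on $E$ and $\tilde A'=\tilde A$.

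The real work here is not any single deep step but the careful bookkeeping: the signs and the substitution $t\mapsto-t$ in the layer-cake identity, and above all the point in part (2) that ``$r(X_t)+c(Y_t)=C$ for almost every $t$'' can be localized to one genuine stable set $X_t\sqcup Y_t$ witnessing $A_{ij}=0$. If one instead argued through the divergence program $\min\{\sum_{ij}P_{ij}\log(P_{ij}/A_{ij}):P{\bf1}=r,\ P^{\top}{\bf1}=c,\ P\ge0\}$ rather than through the dual potential $f$, there would be the extra task of identifying finiteness of that program with the ordinary transportation (Hall) feasibility condition $r(X)\le c(\Gamma(X))$ for all $X\subseteq[n]$, which is equivalent to the stable-set inequalities once $R=C$.
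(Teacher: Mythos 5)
Your proof is correct, and it is worth noting that the paper does not actually prove this statement: Theorem~\ref{thm:RS} is cited from Rothblum--Schneider, and the text only remarks that part~(1) can be deduced from the max-flow min-cut theorem via the equivalence ``$A$ approximately $(r,c)$-scalable $\Leftrightarrow {\cal R}\cap{\cal C}\neq\emptyset$.'' That route is combinatorial and leans on the Sinkhorn/KL convergence analysis to establish the equivalence with ${\cal R}\cap{\cal C}\neq\emptyset$, and the paper does not indicate how to get parts~(2) and~(3) from it. Your route is genuinely different and fully self-contained: you work directly with the log-barrier potential $f(a,b)=\sum_{ij\in E}A_{ij}e^{a_i+b_j}-r\cdot a-c\cdot b$, identify scalability with critical points of $f$, approximate scalability with vanishing gradients (via the Tikhonov perturbation $f+\varepsilon\|\cdot-z_0\|^2$), and reduce the stable-set inequalities to nonnegativity of the recession function $f^\infty$ through the layer-cake identity $r\cdot u+c\cdot v=\int_{-\infty}^{\infty}\bigl(r(X_t)+c(Y_t)-C\bigr)\,dt$ for $X_t=\{i:u_i>t\}$, $Y_t=\{j:v_j\ge -t\}$ (the $R=C$ assumption is exactly what makes the integrand compactly supported). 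The identification of the recession cone with the lineality space under the hypothesis of~(2), by producing a non-jump $t\in(u_i,-v_j)$ and hitting it with the zero-block condition, is the one genuinely nonroutine step and you execute it correctly; part~(3) then drops out of strict convexity of $x\mapsto e^x$ on the segment of minimizers. What your approach buys is uniformity (all three parts from one potential) and a transparent link to the KL-divergence viewpoint used throughout the paper; what the max-flow route buys is a direct combinatorial certificate for~(1) without any convex analysis, which is closer to how the paper actually uses the theorem in Section~3.
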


The only if part of (1) can be seen by the following estimate, 
which is an $\ell_1$-variant of \cite[Lemma 5.2]{LGW00}.
\begin{Lem}\label{lem:l_1}
	For every scaling $\tilde A$ of $A$, it holds
	$$
	\| \tilde A{\bf 1} -  r \|_1  + \| \tilde A^{\top}{\bf 1} - c \|_1 \geq 2 \max_{X \sqcup Y \in {\cal S}} r(X)+ c(Y) - (R+C)/2.
	$$
\end{Lem}
\begin{proof}
	Let $p := \tilde A{\bf 1}$, $q := \tilde A^{\top}{\bf 1}$, and $X \sqcup Y \in {\cal S}$.  
	Since $\tilde A_{ij} = 0$ for $i \in X, j \in Y$, 
	it holds that $\sum_{i \in X} p_i \leq \sum_{j \in [m] \setminus Y} q_j$ 
	and $\sum_{j \in Y} q_j \leq \sum_{i \in [n] \setminus X} p_i$.
	Then we have
	\begin{eqnarray*}
		&&	\| \tilde A{\bf 1} - r \|_1  + \| \tilde A^{\top}{\bf 1} - c \|_1  = \sum_{i=1}^n |p_i - r_i|+ \sum_{j=1}^m |q_j - c_j| \\
		&& =  \sum_{i \in X} |p_i - r_i|+ \sum_{j \in [m]\setminus Y}| q_j - c_j| +  \sum_{i \in [n] \setminus X} |p_i - r_i|+ \sum_{j \in Y}| q_j - c_j| \\
		&&\geq  r(X)- \sum_{i \in X} p_i  + \sum_{j \in [m]\setminus Y} q_j - c([m] \setminus Y) + \sum_{i \in [n] \setminus X} p_i - r([n] \setminus X)
		+ c(Y) - \sum_{j \in Y} q_j \\
		&& \geq 2 r(X) + 2 c(Y) - R-C.
	\end{eqnarray*}
\end{proof}
Let $\Gamma: 2^{[n]} \to 2^{[m]}$ be defined 
by 
\begin{equation}\label{eqn:Gamma}
	\Gamma(X) := \{j \in [m] \mid \mbox{$j$ is adjacent to a node in $X$ in $G(A)$}\}.
\end{equation}
Since every stable set $X \sqcup Y$ is contained in 
a larger stable set $X \sqcup ([m] \setminus \Gamma(X))$.
The condition in Theorem~\ref{thm:RS}~(1) is also written as $R = r([n]) = c([m]) = C$ and
$r(X) \leq c(\Gamma(X))$ for all $X \subseteq [n]$.
A subset $X$ violating this condition is a certificate of nonscalability, 
and is also called a {\em Hall blocker}.

%

\paragraph{Sinkhorn algorithm.}
The {\em Sinkhorn algorithm} is a simple method 
to obtain an approximate $(r,c)$-scaling.
The {\em row-normalization} $R(A)$ of $A$ is a scaling of $A$ 
defined by $R(A)_{ij} := (r_i/p_i) A_{ij}$ for $p_i := (A{\bf 1})_i$.
Similarly, the {\em column-normalization} $C(A)$ of $A$ is a scaling of $A$ 
defined by $C(A)_{ij} := (c_j/q_j) A_{ij}$ for $q_j := (A^{\top}{\bf 1})_j$.
The Sinkhorn algorithm is to repeat row-normalization $A \leftarrow R(A)$
and column-normalization $A \leftarrow C(A)$ alternatively.

This procedure is also described as update of scaling vectors $x,y$ of scaling $\tilde A= (A_{ij}x_iy_j)$.
The {\em row-normalization vector} $R(y) \in \RR^n$ relative to 
column-scaling vector $y$ is defined by 
$R(y)_i := r_i/\sum_{j} A_{ij}y_j$.
The {\em column-normalization vector} $C(x) \in \RR^m$ relative to row-scaling vector $x$ is defined by
$C(x)_j := c_j/\sum_{i} A_{ij}x_i$.
The Sinkhorn algorithm is to repeat row-normalization $x \leftarrow R(y)$
and column-normalization $y \leftarrow C(x)$ alternatively.


\subsection{Matrix scaling as geometric programming}\label{subsec:geometric}
In this section, we consider geometric programming formulation of matrix scaling, where 
we assume $R=C$.
It is well-known (see \cite{Idel_survey,RothblumSchneider}) that
the Sinkhorn algorithm is viewed as alternating minimization of 
the following optimization problem:
\begin{equation}\label{eqn:capacity}
	{\rm inf.}\quad \kappa(x,y) := \frac{\sum_{i,j} A_{ij}x_iy_j}{\prod_i x_i^{r_i/R}  \prod_j y_j^{c_j/C}} \quad {\rm s.t.}\quad  x_i,y_j > 0 \ (i \in [n],j \in [m]).
\end{equation} 
Fixing $y$, an optimal $x$ is given by row-normalization vector $R(y)$. 
Also, fixing $x$, an optimal $y$ is given by column-normalization vector $C(x)$. 
\begin{Thm}[\cite{RothblumSchneider}]\label{thm:scalability_bounded}
\begin{itemize}
	\item[(1)]  $A$ is $(r,c)$-scalable if and only if $\inf_{x,y} \kappa(x,y)$ is attained by some $x,y$. 
	\item[(2)] $A$ is approximately $(r,c)$-scalable if and only if $\inf_{x,y} \kappa(x,y) > 0$.
\end{itemize}
\end{Thm}
In the situation of (1), 
an $(r,c)$-scaling is given by $(A_{ij}x_iy_j)$.
In (2), an approximate $(r,c)$-scaling $(A_{ij}x_iy_j)$ is obtained by
near optimal solution $x,y$ with a prescribed accuracy.
 
The following formula of the decrement of $\kappa$ 
in row- and column-normalization is a variant of \cite[Lemma 2]{AWR2017}. 
\begin{Lem}~\label{lem:decrement} Let $y = C(x)$, $x' := R(y)$, and $y' := C(x')$. Then it holds
	\begin{equation}\label{eqn:decrement}
	\log \kappa(x',y') - \log \kappa(x,y) = - D\left( r \| p \right)/R 
	- D( c \| q )/C, 
	\end{equation}
	where $p_i := \sum_{j} A_{ij}x_iy_j$ and 
	$q_j := \sum_{i} A_{ij}x'_iy_j$. 
\end{Lem}
\begin{proof}
	Note that $x'$ is written as $x'_i = r_i x_i/p_i$.
	By calculation, we have
	$\log \kappa(x',y) - \log \kappa(x,y) = - D( r \| p )/R$.
	 Similarly, we have $\log \kappa(x',y') - \log \kappa(x',y) = - D( c \| q )/C$.
\end{proof}

\paragraph{Unbounded certificate in geometric programming.}
The problem (\ref{eqn:capacity}) is transformed into convex optimization 
by taking logarithm with variable change $x_i = e^{\xi_i}$, $y_j= e^{\eta_j}$:
\begin{equation}\label{eqn:capacity_geometric}
	\inf. \quad \log \sum_{i,j} A_{ij}e^{\xi_i+ \eta_j} - r^{\top}\xi/R - c^{\top}\eta/C \quad {\rm s.t.}\quad (\xi,\eta) \in \RR^n \times \RR^m.
\end{equation}
This problem falls into the class of convex optimization,  
called {\em geometric programming}~\cite{BJVH2007}.
A general form of unconstraint geometric program is:
Given positive reals $q_k > 0$ and vectors $\omega_k \in \RR^d$ $(k=1,2,\ldots,m \geq 2)$, 
minimize  function $f:\RR^d \to \RR$ defined by
\begin{equation}
	f(\xi) := \log \sum_{k=1}^m q_k e^{\omega_k^{\top} \xi} \quad (\xi \in \RR^d).
\end{equation}
Then, 
our problem~(\ref{eqn:capacity_geometric}) is for the case of $q_{ij} := A_{ij}$ and $\omega_{ij} := (e_i,e_j) - (r/R,c/C)$ for $i,j$ with $A_{ij} > 0$.
Theorem~\ref{thm:scalability_bounded} (2) says that
the approximate scalability is equivalent to the geometric program~(\ref{eqn:capacity_geometric}) being bounded below. 

We observe a simple criterion for 
(un)boundedness of unconstraint geometric program:
\begin{Lem}\label{lem:certificate_geometric}
	\begin{itemize}
		\item[(1)] $f$ is unbounded below if and only if 
		there is $\xi \in \RR^d$ such that $\omega_k^{\top}\xi < 0$ holds for every $k$.
		\item[(2)] Let $q_{\rm min} := \min_{k} q_k > 0$. If $f(\xi) \leq \log q_{\rm min}$, 
		then $\omega_k^{\top}\xi < 0$ holds for every $k$.
	\end{itemize}
\end{Lem}
\begin{proof}
	(2). If  $\omega_{k'}^{\top}\xi \geq 0$ for some $k'$, 
	then $f(\xi) = \log \sum_{k} q_k e^{\omega_k^{\top}\xi} > \log q_{k'} \geq \log q_{\rm min}$. 	
	
	(1). The if part follows from $\lim_{t \to \infty} f(t\xi) = \lim_{t \to \infty} \log \sum_{k} q_k e^{t \omega_k^{\top} \xi} = -\infty$.
	The only if part follows from (2).
\end{proof}
A geometric interpretation of (1) is explained as follows:
Observe that $\inf_{\xi} f(\xi) = f(\xi^*)$ if and only if $\nabla f(\xi^*) = 0$, that is,  
the origin $0$ belongs to 
the {\em gradient space} $\nabla f(\RR^d) := \{\nabla f(\xi) \mid \xi \in \RR^d \}$.
Further, $\inf_{\xi} f(\xi) > -\infty$ is equivalent to $0 \in \overline{\nabla f(\RR^d)}$, the closure of the space. 
On the other hand, $\overline{\nabla f(\RR^d)}$ is written as 
the convex hull of $\omega_k$ over $k \in [m]$.
Thus, $\xi$ in (1) represents a normal vector of a separating hyperplane 
between $0$ and $\overline{\nabla f(\RR^d)}$, which 
is a certificate of unboundedness.
See also a recent work \cite{BLNW2020} for  such an aspect 
on unconstraint geometric problem.

Specializing Lemma~\ref{lem:certificate_geometric} 
to (\ref{eqn:capacity_geometric}) 
(with $x_i = e^{\xi_i}$, $y_j = e^{\eta_j}$), we obtain
\begin{Cor}\label{cor:certificate}
	Let $A_{\rm min} := \min \{A_{ij} \mid i,j: A_{ij} > 0\}$.
	Then 
	$A$ is not approximately $(r,c)$-scalable if and only 
	if there are scaling vectors $x,y$ 
	with $\kappa(x,y) \leq A_{\rm min}$. 
	For such $x,y$, it holds  
	$$x_ky_\ell < \prod_{i} x_i^{r_i/R} \prod_j y_j^{c_j/C} \quad (\forall k,\ell: A_{k\ell} > 0).$$
\end{Cor}
The closure $\overline{\nabla \log \kappa(\RR^n \times \RR^m)}$ 
of the gradient space is the convex hull of $(e_i-r/R,e_j-c/C)$
over $i,j$ with $A_{ij} > 0$. 
The condition of Theorem~\ref{thm:RS} (1) says $0 \in \overline{\nabla \log \kappa(\RR^n \times \RR^m)}$, i.e., 
the origin $0$ satisfies all inequalities 
defining $\overline{\nabla \log \kappa(\RR^n \times \RR^m)}$.

Therefore, for a nonscalable matrix $A$, 
after finitely many Sinkhorn iterations, scaling vectors 
$x,y$ become an unbounded certificate of $\kappa(x,y) \leq A_{\rm min}$.  
One may expect that such $x,y$
can be rounded by a violating inequality, that is, a Hall blocker. 
The first algorithm in Section~\ref{sec:finding} actually does it.

\subsection{Matrix scaling as KL-divergence minimization}\label{subsec:KL}
To study asymptotic behaviors of the Sinkhorn algorithm for nonscalable matrices, 
we here consider a different optimization formulation.
In fact, as pointed out by Gietl and Reffel~\cite{GietlReffel}, 
the Sinkhorn algorithm is also viewed as alternating 
minimization of the KL-divergence between two linear subspaces of matrices.
This enables us to apply the information-geometric framework of Csisz\'ar and Tusn\'ady~\cite{CsiszarTusnady1984} to 
establish the convergence. 
In this section, following~\cite{GietlReffel} and \cite{CsiszarTusnady1984}, 
we formulate the Sinkhorn algorithm as alternating minimization 
of KL-divergence, and showed its convergence, where
we point out in Lemmas~\ref{lem:sublinear} that the proof of \cite{CsiszarTusnady1984} brings  
an explicit convergence rate.
This estimate seems not well-known but 
plays a key role in the analysis of our second algorithm.  
In this section, we do not assume $R=C$.

We consider the Sinkhorn algorithm 
in matrix update formulation $A \leftarrow C(R(A))$.
Define a sequence $\{ (M_k,N_k)\}_{k=0,1,2,\ldots}$ 
generated by the Sinkhorn algorithm:
\begin{eqnarray}
N_0 & := & C(A), \nonumber \\
M_k & := & R(N_k), \nonumber \\
N_{k+1} &:= & C(M_{k}). \label{eqn:alternating}
\end{eqnarray}	
Define the spaces ${\cal M}, {\cal R}, {\cal C}$ of  $n \times m$ matrices by
\begin{eqnarray*}
	{\cal M} &:= & \{ M \in \RR^{n \times m} \mid  M_{ij} = 0\ (i,j: A_{ij} = 0), M_{ij} \geq 0\ (i,j: A_{ij} > 0) \}, \\
	{\cal R} &:= & \{ M \in {\cal M} \mid  M{\bf 1} = r \}, \\
	{\cal C} &:= & \{ M \in {\cal M} \mid  M^{\top}{\bf 1} = c \}.
\end{eqnarray*}	
Consider the convex optimization of 
minimizing the KL-divergence from ${\cal R}$ to ${\cal C}$:
\begin{equation}\label{eqn:D(M|N)}
{\rm inf}.\quad  D(N \| M) = \sum_{i,j} N_{ij} \log N_{ij}/M_{ij} \quad {\rm s.t.}\quad  M \in {\cal R}, N \in {\cal C}.
\end{equation}
The Sinkhorn algorithm is interpreted 
as alternating minimization of this problem:\footnote{
	The same interpretation holds even if the objective 
	$D(N \| M)$ is replaced by $D(M \| N)$.}
\begin{Lem}[{see \cite{GietlReffel}}]\label{lem:alternating} 
	The following hold:
	\begin{itemize}
		\item[(1)] $\inf_{M \in {\cal R}} D(N \| M) = D(N \| R(N))$ $(N \in {\cal M})$. 
		\item[(2)] $\inf_{N \in {\cal C}} D(N \| M) = D(C(M) \| M)$ $(M \in {\cal M})$.
	\end{itemize}
\end{Lem}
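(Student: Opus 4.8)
The plan is to prove both parts by the same separation argument, reducing each to Gibbs' inequality: for probability vectors $\mu,\nu$ one has $D(\mu\|\nu)\ge 0$, with equality if and only if $\mu=\nu$.

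First I would prove (1). Fix $M\in\mathcal{M}$ (for which $C(M)$ is well-defined, e.g.\ when $M$ has no zero column). Since $D(M\|N)=\sum_j\sum_i M_{ij}\log M_{ij}/N_{ij}$ is a sum over columns and the constraints $N\in\mathcal C\cap\mathcal M$ act on each column separately---namely $\sum_i N_{ij}=c_j$, $N_{ij}\ge 0$, and $N_{ij}=0$ when $A_{ij}=0$---the minimization decouples into independent per-column problems. For a fixed column $j$, put $q_j:=\sum_i M_{ij}$ and write $M_{ij}=q_j\mu_i$, $N_{ij}=c_j\nu_i$ with $\mu,\nu$ probability vectors supported on $\{i:A_{ij}>0\}$. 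The $j$-th term then equals $q_j\log(q_j/c_j)+q_j D(\mu\|\nu)\ge q_j\log(q_j/c_j)$, with equality exactly when $\nu=\mu$, i.e.\ $N_{ij}=(c_j/q_j)M_{ij}$. The vector $\nu=\mu$ is feasible (nonnegative, sums to one, correct support), the resulting $N$ is precisely $C(M)$, and a one-line computation gives $D(M\|C(M))=\sum_j q_j\log(q_j/c_j)$, which is the sum of the per-column minima; this yields (1). Two degenerate points must be handled in passing: if $M_{ij}>0$ while $N_{ij}=0$ the $j$-th term is $+\infty$, so any competitor of $C(M)$ may be assumed positive on $\supp M$; and a column with $q_j=0$ contributes $0$ for every feasible $N$ and can be dropped.

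Part (2) is the transpose-symmetric statement. Now the constraint $M\in\mathcal R$ fixes row sums, so with $N\in\mathcal M$ fixed the objective decouples over rows; for row $i$, writing $p_i:=\sum_j N_{ij}$, $M_{ij}=r_i\mu_j$, $N_{ij}=p_i\nu_j$, the $i$-th term equals $r_i\log(r_i/p_i)+r_i D(\mu\|\nu)$, minimized iff $\mu=\nu$, i.e.\ $M_{ij}=(r_i/p_i)N_{ij}=R(N)_{ij}$, and $D(R(N)\|N)=\sum_i r_i\log(r_i/p_i)$ again matches the minimum. None of this is hard; the only care needed is the bookkeeping for the conventions $0\log 0=0$ and $x\log x/0=\infty$, together with the verification that $C(M)\in\mathcal C\cap\mathcal M$ and $R(N)\in\mathcal R\cap\mathcal M$---which is where I expect the (minor) obstacle to lie, rather than in the optimization itself.
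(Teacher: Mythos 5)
Your proof is correct and is essentially the paper's own argument, just written with the per-column (resp.\ per-row) decoupling and the Gibbs-inequality reduction made explicit. The paper's appendix proof applies the log-sum inequality in one stroke: for $N\in\mathcal C$,
\[
D(M\|N)=\sum_{i,j}M_{ij}\log\frac{M_{ij}}{N_{ij}}\;\ge\;\sum_{j}\Bigl(\sum_i M_{ij}\Bigr)\log\frac{\sum_i M_{ij}}{\sum_i N_{ij}}=\sum_j q_j\log\frac{q_j}{c_j}=D(M\|C(M)),
\]
which is exactly the identity $q_j\log(q_j/c_j)+q_j D(\mu\|\nu)\ge q_j\log(q_j/c_j)$ you derive, since the log-sum inequality \emph{is} Gibbs' inequality applied to the normalized column distributions $\mu,\nu$. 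Your treatment of the edge cases (the $0\log 0$ and division-by-zero conventions, and the observation that zero columns of $M$ contribute nothing) is sound, and part (2) is indeed just the transpose. No gap.
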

The following properties link with
the convergence of the sequence $(M_k,N_k)$.
\begin{Prop}[{\cite{CsiszarTusnady1984}; see also \cite{GietlReffel}}]\label{prop:5point}
	For $M \in {\cal R}, N \in {\cal C}$, the following hold:
	\begin{description}
		\item[{\rm (3-point property)}] $ D(N \| N_k ) + D(N_k \| M_{k-1}) = D(N \| M_{k-1})$. 
		\item[{\rm (4-point property)}] $D(N \| M_{k}) \leq D(N \| N_k) + D(N \| M)$. 
	    \item[{\rm (5-point property)}] $D(N_k \| M_{k-1}) + D(N \| M_{k}) \leq D(N \| M) + D(N \| M_{k-1})$.
	\end{description}
\end{Prop}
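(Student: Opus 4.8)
The plan is to derive all three properties from the single explicit "Pythagorean" identity that underlies alternating KL-minimization, namely a three-point identity for $I$-projections. First I would establish the following building block: for any $M \in \mathcal{R}$ and any $N \in \mathcal{M}$,
\[
D(M \| N) = D(M \| R(N)) + D(R(N) \| N),
\]
which is the KL-divergence analogue of the Pythagorean theorem for the projection onto the affine set $\mathcal{R} = \{M : M\mathbf{1} = r\}$. The key computation is that $\log R(N)_{ij}/N_{ij} = \log r_i/p_i(N)$ depends only on the row index $i$, so that $\sum_{ij}(M_{ij} - R(N)_{ij})\log R(N)_{ij}/N_{ij} = \sum_i (r_i - r_i)\log r_i/p_i(N) = 0$ for every $M \in \mathcal{R}$; expanding $D(M\|N)$ by inserting $\pm \log R(N)_{ij}$ then gives the identity. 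An entirely symmetric identity holds for the column-normalization $C(\cdot)$ and any $N \in \mathcal{C}$: $D(M \| N) = D(M \| C(M)) + D(C(M)^{?})$—more precisely the version I actually need is that for fixed $M$, $C(M)$ is the $I$-projection of... I should be careful here: the column statement I want is $D(M \| N) = D(M \| C(M)) + \sum(\cdots)$ won't directly be Pythagorean because $C(M)$ is obtained from $M$, not $N$. The clean statements are (a) $D(M\|N) = D(M \| R(N)) + D(R(N)\|N)$ for $M \in \mathcal{R}$, and its transpose (b) $D(M\|N) = D(M' \| N) + D(M' \| M)$ is wrong too—so the correct second ingredient is the one from Lemma~\ref{lem:alternating}(1) together with a direct identity: for any $M, M' \in \mathcal{M}$ with $M' \in \mathcal{C}$... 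I will instead use that $M_k = R(N_k)$ and $N_{k+1} = C(M_k)$, apply ingredient (a) with $N = N_k$, and apply the transposed ingredient with the roles of rows/columns swapped.

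Given the two Pythagorean identities, the \textbf{3-point property} follows immediately: since $M_k = R(N_k)$ and $M \in \mathcal{R}$, ingredient (a) with $N = N_k$ reads $D(M \| N_k) = D(M \| M_k) + D(M_k \| N_k)$, which is exactly the claim. For the \textbf{4-point property}, I would first note $N_{k+1} = C(M_k)$ and use the column-Pythagorean identity: for any $N \in \mathcal{C}$, $D(M \| N) \ge D(M \| N) - D(M \| C(M)) \ge$... more directly, monotonicity of $I$-projection gives $D(M \| N_{k+1}) = D(M \| C(M_k)) \le D(M \| M_k)$ only when one also accounts for $N$; the standard route is $D(M \| C(M_k)) + D(M_k \| C(M_k)) \le \dots$. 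Concretely I expect: apply the column identity $D(X \| C(M_k)) + (\text{correction}) = D(X \| \text{something})$ with $X = M$ using that $M^\top \mathbf{1} = $ not necessarily $c$—hmm, $M \in \mathcal{R}$ need not lie in $\mathcal{C}$. The honest approach: use Lemma~\ref{lem:alternating}(1), $D(M_k \| C(M_k)) = \inf_{N' \in \mathcal{C}} D(M_k \| N')$, and a generalized-Pythagoras \emph{inequality} $D(M \| C(M_k)) \le D(M \| N') + D(M_k \| N') - D(M_k \| C(M_k))$ for all $N' \in \mathcal{C}$; setting $N' = N$ and bounding $D(M_k \| N) \le D(M \| M_k) + D(M_k \| C(M_k))$... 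I'd chase the exact combination so that the $D(M_k \| N_k)$-type terms cancel, landing on $D(M \| N_{k+1}) \le D(M \| M_k) + D(M \| N)$. Finally, the \textbf{5-point property} is obtained by \emph{adding} the 3-point identity $D(M \| N_k) = D(M \| M_k) + D(M_k \| N_k)$ to the 4-point inequality $D(M \| N_{k+1}) \le D(M \| M_k) + D(M \| N)$ and then \emph{subtracting} $D(M \| M_k)$ from both sides: this yields $D(M_k \| N_k) + D(M \| N_{k+1}) \le D(M \| N_k) + D(M \| N)$, as desired.

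The main obstacle I anticipate is getting the book-keeping of the 4-point property exactly right: unlike the 3-point property, which is a clean equality coming from one Pythagorean identity, the 4-point property is a genuine inequality and requires combining a Pythagorean \emph{equality} for the row-projection with a Pythagorean \emph{inequality} (the "generalized Pythagoras" / log-sum type bound) for the column-projection $N_{k+1} = C(M_k)$, together with the optimality characterization in Lemma~\ref{lem:alternating}. One must verify that the auxiliary non-negative divergence terms appearing in these two steps line up so that everything except the four displayed terms cancels or can be dropped by non-negativity of $D(\cdot\|\cdot)$; this is where sign errors are easy. Once the 4-point property is in hand, the 5-point property is purely formal (add the previous two and cancel a common term), and the 3-point property needs only the single row-Pythagorean identity proved at the outset. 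I would therefore structure the write-up as: (i) prove the row-Pythagorean identity via the "$\log R(N)_{ij}/N_{ij}$ depends only on $i$" observation; (ii) deduce the 3-point property; (iii) prove the 4-point property by the combined equality/inequality argument using Lemma~\ref{lem:alternating}; (iv) deduce the 5-point property by linear combination.
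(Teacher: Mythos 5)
Your row-Pythagorean identity, your 3-point argument, and your derivation of the 5-point from the other two are all correct and coincide with the paper's approach. The genuine gap is in the 4-point property, which is the crux of the proposition, and your sketch stalls there. The ``generalized Pythagoras inequality'' $D(M\|C(M_k)) \le D(M\|N') + D(M_k\|N') - D(M_k\|C(M_k))$ you posit is not established (note that $C(M_k)$ is the minimizer of $D(M_k\|\cdot)$ over ${\cal C}$, i.e.\ a projection in the \emph{second} argument, so the usual linear-family Pythagorean identity for $I$-projections does not apply directly), and even granting it, the bookkeeping you describe does not visibly land on $D(M\|N_{k+1})\le D(M\|M_k)+D(M\|N)$---you acknowledge this yourself.

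The ingredient your plan never names, and which does the work, is the log-sum (data-processing) inequality for KL-divergence: $D(M^{\top}{\bf 1}\,\|\,N^{\top}{\bf 1})\le D(M\|N)$. With it the 4-point property is short. Since $N_{k+1}=C(M_k)$ and $N\in{\cal C}$ so that $N^{\top}{\bf 1}=c$, the ratio $(M_k)_{ij}/(N_{k+1})_{ij}=(M_k^{\top}{\bf 1})_j/(N^{\top}{\bf 1})_j$ depends only on $j$, whence
\begin{align*}
D(M\|N_{k+1})-D(M\|M_k) &= \sum_j (M^{\top}{\bf 1})_j \log\frac{(M_k^{\top}{\bf 1})_j}{(N^{\top}{\bf 1})_j}\\
&= D(M^{\top}{\bf 1}\,\|\,N^{\top}{\bf 1})-D(M^{\top}{\bf 1}\,\|\,M_k^{\top}{\bf 1}) \le D(M^{\top}{\bf 1}\,\|\,N^{\top}{\bf 1}) \le D(M\|N),
\end{align*}
where the first inequality uses nonnegativity of $D(\cdot\|\cdot)$ (both column-marginals sum to $R$, since $M\in{\cal R}$ and $M_k=R(N_k)\in{\cal R}$) and the second is the log-sum inequality applied columnwise. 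Without this data-processing step your 4-point argument, and hence your 5-point argument, is not actually a proof.
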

For completeness, 
the proofs of Lemma~\ref{lem:alternating} 
and Proposition~\ref{prop:5point} are given in Appendix.

Choose an arbitrary minimizer $(\tilde M^*,\tilde N^*)$ of (\ref{eqn:D(M|N)}).
It actually exists, 
since $D$ is a lower semi-continuous convex function 
on compact set ${\cal R} \times {\cal C}$. From the 5-point property with $(M,N) = (\tilde M^*,\tilde N^*)$, we have
\begin{equation*}
D(N_k \| M_{k-1}) - D(\tilde N^* \| \tilde M^*) \leq D(\tilde N^* \| M_{k-1}) - D(\tilde N^* \| M_{k}) \quad (k=0,1,2,\ldots). 
\end{equation*}
Adding them from $k=1$ to $\ell$, we obtain
\begin{equation*}
\sum_{k=1}^\ell \left\{ D(N_k \| M_{k-1}) - D(\tilde N^* \| \tilde M^*) \right\} \leq D(\tilde N^* \|M_0) - D(\tilde N^* \| M_{\ell}). 
\end{equation*} 
Then it holds $\ell (D(M_\ell \| N_{\ell}) - D(\tilde M^* \| \tilde N^*)) \leq$ LHS since
\[
D(N_k \| M_{k-1}) \geq D(N_k \| M_{k}) \geq D(N_{k+1} \| M_{k}) \geq D(\tilde N^* \| \tilde M^*).
\] 
With $D(\tilde N^* \|M_0) - D(\tilde N^* \| M_{\ell}) = D(\tilde N^* \| (C/R)M_0) - D(\tilde N^* \| (C/R)M_{\ell}) \leq D(\tilde N^* \| (C/R)M_0)$, we obtain: 
\begin{Lem}\label{lem:sublinear} For a positive integer $\ell >0$, it holds
	\begin{equation}
	D(N_\ell \| M_\ell) - D(\tilde N^* \| \tilde M^*) \leq \frac{D(\tilde N^* \|(C/R)M_0)}{\ell}.
	\end{equation}	
\end{Lem}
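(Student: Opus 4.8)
The plan is to telescope the one-step progress guaranteed by the 5-point property. First I would instantiate Proposition~\ref{prop:5point} at $(M,N) = (M^*,N^*)$, the chosen minimizer of (\ref{eqn:D(M|N)}). The 5-point property then reads $D(M_k \| N_k) + D(M^* \| N_{k+1}) \leq D(M^* \| N^*) + D(M^* \| N_k)$, which rearranges to the per-iteration bound
\[
D(M_k \| N_k) - D(M^* \| N^*) \leq D(M^* \| N_k) - D(M^* \| N_{k+1}).
\]
Summing over $k = 0, 1, \ldots, \ell$, the right-hand side telescopes to $D(M^* \| N_0) - D(M^* \| N_{\ell+1})$; discarding the nonnegative term $D(M^* \| N_{\ell+1}) \geq 0$ leaves $\sum_{k=0}^{\ell}\bigl(D(M_k \| N_k) - D(M^* \| N^*)\bigr) \leq D(M^* \| N_0)$.

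Second I would argue that the sequence $\{D(M_k \| N_k)\}_k$ is nonincreasing and bounded below by $D(M^* \| N^*)$. Monotonicity follows from the alternating-minimization characterization in Lemma~\ref{lem:alternating}: since $N_{k+1} = C(M_k)$ attains $\inf_{N \in {\cal C}} D(M_k \| N)$ and $N_k \in {\cal C}$, we get $D(M_k \| N_k) \geq D(M_k \| N_{k+1})$; since $M_{k+1} = R(N_{k+1})$ attains $\inf_{M \in {\cal R}} D(M \| N_{k+1})$ and $M_k \in {\cal R}$, we get $D(M_k \| N_{k+1}) \geq D(M_{k+1} \| N_{k+1})$. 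Chaining these gives $D(M_k \| N_k) \geq D(M_{k+1} \| N_{k+1})$ for all $k$, and $D(M^* \| N^*)$ is the infimum in (\ref{eqn:D(M|N)}), hence a lower bound. Consequently each of the $\ell$-plus-one summands above is at least $D(M_\ell \| N_\ell) - D(M^* \| N^*)$, so the whole sum is at least $\ell\bigl(D(M_\ell \| N_\ell) - D(M^* \| N^*)\bigr)$ (in fact $(\ell+1)$ times, but $\ell$ suffices).

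Combining the two bounds yields $\ell\bigl(D(M_\ell \| N_\ell) - D(M^* \| N^*)\bigr) \leq D(M^* \| N_0)$, which is the claim after dividing by $\ell$. The only point requiring care — rather than a genuine obstacle — is the index bookkeeping: the sequence is initialized at $N_0 = C(A)$ with $M_k = R(N_k)$ and $N_{k+1} = C(M_k)$, so one must keep the telescoping range consistent with the monotone chain $D(M_k \| N_k) \geq D(M_k \| N_{k+1}) \geq D(M_{k+1} \| N_{k+1})$ in order to legitimately replace a sum of $\ell$ terms by $\ell$ copies of the smallest. Beyond that, once the 5-point property is in hand this is the standard $O(1/\ell)$ convergence-rate argument for alternating minimization, and no additional ideas are needed.
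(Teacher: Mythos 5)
Your proof is correct and follows essentially the same route as the paper's: instantiate the 5-point property at the minimizer $(M^*,N^*)$, telescope, and combine with the monotonicity chain $D(M_k \| N_k) \geq D(M_k \| N_{k+1}) \geq D(M_{k+1} \| N_{k+1}) \geq D(M^* \| N^*)$ to replace the sum by $\ell$ copies of the smallest term. If anything, your index bookkeeping is slightly tidier than the paper's.
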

Thus the sequence $D(M_k \| N_k)$ converges to 
the infimum of~(\ref{eqn:D(M|N)}).
Particularly, $\{(M_k,N_k)\}_k$ has an accumulating point $(M^*, N^*)$ 
in compact set ${\cal R} \times {\cal C}$, which is also a minimizer of (\ref{eqn:D(M|N)}).
By adding 3-point property for $k+1$ and 4-point property for $k$, 
we obtain $D(N^* \| N_{k+1}) \leq D(N^* \| N_{k}) + (D(N^*\|M^*) - D(N_{k+1}\|M_k)) \leq D(N^* \| N_{k})$.
Necessarily, $D(N^* \| N_{k})$ converges to zero. By Pinsker's inequality (Lemma~\ref{lem:Pinsker}), we have: 
\begin{Thm}[{\cite{CsiszarTusnady1984}; see \cite{GietlReffel}}]
The sequence $(M_k,N_k)$ converges to a minimizer  
$(M^*,N^*)$ of (\ref{eqn:D(M|N)}).
\end{Thm}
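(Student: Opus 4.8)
The plan is to bootstrap the ingredients assembled just before the statement---Lemma~\ref{lem:sublinear}, the $3$- and $4$-point properties of Proposition~\ref{prop:5point}, lower semicontinuity of $D$ on the compact set ${\cal R}\times{\cal C}$, and Pinsker's inequality~(\ref{eqn:Pinsker})---into convergence of each coordinate, handling $\{M_k\}$ first and then deducing $\{N_k\}$ from the identity $N_{k+1}=C(M_k)$.

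For $\{M_k\}$ I would make precise the two steps that the paragraph preceding the theorem leaves implicit. Lemma~\ref{lem:sublinear} gives that $D(M_k\|N_k)$ tends to the infimum of~(\ref{eqn:D(M|N)}), and compactness together with lower semicontinuity produces an accumulating point $(M^*,N^*)$ of $\{(M_k,N_k)\}$ attaining this infimum. Summing the $3$-point property at step $k+1$ with the $4$-point property at step $k$ (taking $M:=M^*$, $N:=N^*$) gives $D(M^*\|M_{k+1})\le D(M^*\|M_k)$, so $\{D(M^*\|M_k)\}_k$ is nonincreasing, nonnegative, and finite (it is at most $D(M^*\|M_0)<\infty$, since $M_0=R(C(A))$ has the same support as $A$, hence support containing that of $M^*$). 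Along a subsequence with $M_{k_j}\to M^*$ entrywise, $D(M^*\|M_{k_j})\to D(M^*\|M^*)=0$, because $q\mapsto D(M^*\|q)$ is continuous at $q=M^*$: the entries with $M^*_{ij}=0$ contribute $0$ by the convention $0\log(0/x)=0$, while the entries with $M^*_{ij}>0$ only involve division by $M_{k_j,ij}\to M^*_{ij}>0$. A nonincreasing, nonnegative sequence with a null subsequence is itself null, so $D(M^*\|M_k)\to 0$, and Pinsker's inequality~(\ref{eqn:Pinsker}) (both matrices have total mass $R$) then gives $\|M_k-M^*\|_1\to 0$.

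For $\{N_k\}$ I would transfer this through $N_{k+1}=C(M_k)$. The delicate point is that the column-normalization map $C$ is continuous at a matrix only when that matrix has no zero column, so I would first show that $((M^*)^{\top}{\bf 1})_j>0$ for every $j$. Each $N_k\in{\cal C}$ has total mass $C$, so its row sums satisfy $(N_k{\bf 1})_i\le C$, and therefore
\[
(M_k^{\top}{\bf 1})_j = \sum_i \frac{r_i}{(N_k{\bf 1})_i}\, N_{k,ij} \ \geq\ \frac{1}{C}\sum_i r_i\, N_{k,ij} \ \geq\ \frac{\min_i r_i}{C}\, (N_k^{\top}{\bf 1})_j = \frac{\min_i r_i}{C}\, c_j > 0
\]
uniformly in $k$; letting $k\to\infty$ yields $((M^*)^{\top}{\bf 1})_j\ge \frac{\min_i r_i}{C}\,c_j>0$. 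Thus $C$ is continuous on the closed set of matrices all of whose $j$-th column sums are at least $\frac{\min_i r_i}{C}c_j$, a set that contains every $M_k$ and $M^*$; hence $N_{k+1}=C(M_k)\to C(M^*)=:N^*$. Therefore $(M_k,N_k)\to(M^*,N^*)$, and this limit, being an accumulating point of the sequence, is a minimizer of~(\ref{eqn:D(M|N)}).

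I expect the genuine obstacle to be this last transfer: entrywise convergence $M_k\to M^*$ does not by itself force $N_{k+1}=C(M_k)$ to converge, because column normalization can blow up near a matrix with a vanishing column, and the remedy is precisely the uniform positive lower bound on the column sums of the $M_k$ inherited from the bounded row sums of the $N_k$. In the first part the one step needing care is $D(M^*\|M_{k_j})\to 0$, which just amounts to the continuity of $q\mapsto D(M^*\|q)$ at its own minimizer $M^*$.
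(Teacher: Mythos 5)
Your argument for $\{M_k\}$ is exactly the paper's route: Lemma~\ref{lem:sublinear} gives $D(M_k\|N_k)\to\inf$, compactness of ${\cal R}\times{\cal C}$ and lower semicontinuity of $D$ produce a subsequential minimizer $(M^*,N^*)$, the $3$-point property at $k+1$ plus the $4$-point property at $k$ (together with $D(M_{k+1}\|N_{k+1})\ge D(M^*\|N^*)$) give Fej\'er-type monotonicity $D(M^*\|M_{k+1})\le D(M^*\|M_k)$, a subsequence drives this to $0$, and Pinsker turns KL-convergence into $\ell_1$-convergence. The one place you go beyond the paper's sketch is the convergence of $\{N_k\}$: the paper stops at ``$D(M^*\|M_k)\to 0$, hence by Pinsker we have the theorem,'' which literally only yields $M_k\to M^*$, leaving $N_k$ implicit. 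Your transfer via $N_{k+1}=C(M_k)$, together with the uniform lower bound
\[
(M_k^{\top}{\bf 1})_j \;=\; \sum_i \frac{r_i}{(N_k{\bf 1})_i}\,(N_k)_{ij} \;\ge\; \frac{\min_i r_i}{C}\,c_j \;>\; 0
\]
(which holds because $(N_k{\bf 1})_i\le C$ for $N_k\in{\cal C}$), is the right way to secure continuity of $C$ at $M^*$ and conclude $N_{k+1}\to C(M^*)=N^*$. This is a genuine and correctly handled addition; without some such argument one would need to invoke uniqueness of the minimizer in its second argument (not obvious when $M^*$ has zero entries, since $D(M^*\|\cdot)$ is then not strictly convex in those coordinates) or a symmetric $3$/$4$-point argument that the paper's asymmetric statements do not directly provide. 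Your observation that $D(M^*\|M_0)<\infty$ (support containment) and the continuity of $q\mapsto D(M^*\|q)$ at $q=M^*$ (handling zero entries of $M^*$ by the $0\log(0/x)=0$ convention) are both the small points the paper leaves to the reader and you have stated them correctly.
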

We call $(M^*,N^*)$ the {\em Sinkhorn limit} of $A$.
If ${\cal R} \cap {\cal C} \neq \emptyset$, 
then $D(N^* \| M^*)=0$, and $M^* = N^*$; 
this is precisely the case where $A$ is approximately $(r,c)$-scalable. 
\begin{Thm}[\cite{Bregman1967,SinkhornKnopp67}]
	Suppose that $A$ is approximately $(r,c)$-scalable. 
	$M_k$ and $N_k$ converge to an $(r,c)$-scaling $M^* = N^*$.
\end{Thm}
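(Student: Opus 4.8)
The plan is to deduce the statement from the convergence result just established, namely that $(M_k,N_k)$ converges to a minimizer $(M^*,N^*)$ of the divergence program~(\ref{eqn:D(M|N)}). Under approximate $(r,c)$-scalability the optimal value of~(\ref{eqn:D(M|N)}) should turn out to be $0$, which collapses the two limits into a single matrix with the prescribed margins; what then remains — and is the genuine content inherited from \cite{Bregman1967,SinkhornKnopp67} — is to certify that this common limit is in fact a scaling of $A$.

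First I would show $\mathcal{R}\cap\mathcal{C}\neq\emptyset$. By Theorem~\ref{thm:RS}(1), approximate $(r,c)$-scalability is equivalent to $R=C$ together with $r(X)+c(Y)\le C$ for every stable $X\sqcup Y$, and these are precisely the Gale--Hoffman feasibility conditions for a nonnegative matrix on the pattern of $A$ with row margins $r$ and column margins $c$ (alternatively, one may extract a convergent subsequence of the approximating scalings, which is bounded since their total mass tends to $R=C$, and whose limit lies in $\mathcal{M}$ with the right margins). Picking $P\in\mathcal{R}\cap\mathcal{C}$, the pair $(P,P)$ is feasible for~(\ref{eqn:D(M|N)}) with objective $D(P\|P)=0$, so the optimum is $0$ and hence $D(M^*\|N^*)=0$. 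Since $M^*\in\mathcal{R}$ and $N^*\in\mathcal{C}$ have equal total mass $R=C$, Pinsker's inequality~(\ref{eqn:Pinsker}) forces $\|M^*-N^*\|_1=0$, i.e.\ $M^*=N^*$, and this common matrix satisfies $M^*{\bf 1}=r$ and $(M^*)^{\top}{\bf 1}=c$.

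The crux is the remaining step: that $M^*$ is an $(r,c)$-scaling of $A$. Each row- and column-normalization rescales the rows, resp.\ columns, by strictly positive factors, so by induction every $M_k$ and $N_k$ is a scaling of $A$ with exactly the support of $A$; consequently the scaling-invariant data — the alternating ``cyclic products'' $\prod_t M_{i_tj_t}/\prod_t M_{i_{t+1}j_t}$ along closed walks in $G(A)$ — are constant along the sequence and equal to those of $A$. Passing to the limit, $M^*$ retains these values along every closed walk avoiding edges killed in the limit, which is exactly the condition for $M^*$, restricted to its support, to be a scaling of the corresponding submatrix of $A$; together with the margin conditions this pins down $M^*$, uniquely by Theorem~\ref{thm:RS}(3) when $A$ is exactly scalable, and otherwise as the scaling of the maximal totally supported part in the sense of Theorem~\ref{thm:RS}(2). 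I expect this closure step to be the main obstacle, because the set of scalings of $A$ is not closed in $\mathcal{M}$ (a multiplier may blow up or vanish in the limit); the invariance of cyclic products under scaling, combined with the uniqueness in Theorem~\ref{thm:RS}(3), is how I would get around it. Everything else — compactness of $\mathcal{R}\times\mathcal{C}$, Pinsker, and the monotonicity of $D(M^*\|M_k)$ — is already in hand from the preceding discussion.
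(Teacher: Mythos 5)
Your argument for the substantive part of the statement is exactly the paper's: the preceding theorem gives convergence of $(M_k,N_k)$ to a minimizer of (\ref{eqn:D(M|N)}); approximate $(r,c)$-scalability is equivalent (via Theorem~\ref{thm:RS}(1) and max-flow/min-cut) to ${\cal R}\cap{\cal C}\neq\emptyset$, so the optimal value is $0$; and then $D(M^*\|N^*)=0$ with $R=C$ forces $M^*=N^*\in{\cal R}\cap{\cal C}$ by Pinsker. That is all the paper itself argues -- the theorem is stated as a citation of \cite{Bregman1967,SinkhornKnopp67}, and the text around it supplies precisely these three observations. Where you go beyond the paper is the closure step, i.e.\ certifying that the common limit is literally a scaling $A_{ij}x_iy_j$. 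Your cyclic-product invariance argument is the standard and correct way to handle this, and you are right that it is the genuinely delicate point (the set of scalings of $A$ is not closed). One caveat worth making explicit: under the paper's own definitions, if $A$ is approximately but not exactly $(r,c)$-scalable, then no $(r,c)$-scaling of $A$ exists (Theorem~\ref{thm:RS}(2) fails), so the limit cannot be one in the strict sense; it is a scaling only of the ``totally supported'' submatrix, as you note at the end. So the theorem statement should be read as ``$M_k,N_k$ converge to a common matrix in ${\cal R}\cap{\cal C}$, which is the $(r,c)$-scaling of $A$ when $A$ is exactly scalable'' -- your proof establishes exactly this, and in that reading it is complete modulo routine details of the cyclic-product limit argument.
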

In particular, the approximate scalability is 
equivalent to ${\cal R} \cap {\cal C} \neq \emptyset$.
As will be seen in the beginning of the next section, 
This is a network flow feasibility condition, and Theorem~\ref{thm:RS}~(1) can be deduced from the max-flow min-cut theorem.

We continue to study the convergence of the sequence $p^{(k)}$ of column-marginals
$$
p^{(k)} := N_k {\bf 1} \quad (k=0, 1,2,\ldots).
$$
Let $p^* := N^*{\bf 1}$ of the limit of the sequence $p^{(k)}$. 
The sequence $p^{(k)}$ belongs to the following convex polytope $P := P(A,c)$:
\begin{equation} \label{eqn:P}
	P :=  \{p \in \RR^n \mid p = N{\bf 1}, N \in {\cal C} \}.
\end{equation}
\begin{Lem}\label{lem:D(r|p)}
	\begin{itemize}
		\item[(1)] $D(N\| R(N)) = D(N{\bf 1} \| r)$ for $N \in {\cal M}$.
		\item[(2)] $D(N^* \|M^*) = D(p^* \| r) = \inf_{p \in P} D(p \|r)$.
	\end{itemize}
\end{Lem}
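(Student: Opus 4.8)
The plan is to prove (1) by direct calculation and then deduce (2) from (1) together with the alternating-minimization characterization already established. For part (1), fix $N \in {\cal M}$ and write $p := N{\bf 1}$, so $p_i = \sum_j N_{ij}$. By definition of row-normalization, $R(N)_{ij} = (r_i/p_i)N_{ij}$ whenever $N_{ij} > 0$ (and $0$ otherwise, with the convention that the ratio is harmless since $p_i > 0$ as $A$ has no zero rows). Substituting into the definition of KL-divergence,
\begin{equation*}
D(R(N) \| N) = \sum_{i,j} \frac{r_i}{p_i} N_{ij} \log \frac{(r_i/p_i)N_{ij}}{N_{ij}} = \sum_{i,j} \frac{r_i}{p_i} N_{ij} \log \frac{r_i}{p_i}.
\end{equation*}
Now pull the $(i,j)$-independent factor $\log(r_i/p_i)$ out of the sum over $j$ and use $\sum_j N_{ij} = p_i$, so that $\sum_j (r_i/p_i)N_{ij} = r_i$. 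This collapses the double sum to $\sum_i r_i \log(r_i/p_i) = D(r \| p) = D(r \| N{\bf 1})$, which is exactly (1). One should note the edge case $r_i = 0$: then $R(N)$ has a zero $i$-th row and both sides contribute $0$ by the convention $0\log 0/x := 0$; but in fact $r$ is assumed positive throughout, so this does not arise.

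For part (2), the inequality $\inf_{M \in {\cal R}, N \in {\cal C}} D(M \| N) \ge \inf_{p \in P} D(r \| p)$ goes as follows. For any $N \in {\cal C}$, Lemma~\ref{lem:alternating}(2) gives $\inf_{M \in {\cal R}} D(M \| N) = D(R(N) \| N)$, which by part (1) equals $D(r \| N{\bf 1})$. Since $N \in {\cal C}$, the vector $N{\bf 1}$ lies in $P$ by the second description of $P$ in~(\ref{eqn:P}). Hence taking the infimum over $N \in {\cal C}$ of the left-hand side is at least $\inf_{p \in P} D(r \| p)$. Conversely, for any $p \in P$, by~(\ref{eqn:P}) there is $N \in {\cal C}$ with $N{\bf 1} = p$, and then $\inf_{M \in {\cal R}, N' \in {\cal C}} D(M \| N') \le D(R(N)\|N) = D(r\|p)$ by part (1); taking the infimum over $p \in P$ yields the reverse inequality. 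Combining the two gives the claimed equality.

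I do not anticipate a genuine obstacle here: part (1) is a one-line telescoping computation once one substitutes the normalization formula, and part (2) is just a repackaging using results already proved (Lemma~\ref{lem:alternating}(2) and the identity~(\ref{eqn:P}) relating $\cal C$ to $P$ via the row-marginal map $N \mapsto N{\bf 1}$). The only mild care needed is bookkeeping with the $0\log 0$ convention and the zero-pattern constraint defining ${\cal M}$—namely that $R(N)_{ij} = 0$ exactly when $N_{ij} = 0$, so the supports of $R(N)$ and $N$ coincide and no $\infty$ terms appear in $D(R(N)\|N)$.
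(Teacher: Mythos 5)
Your proof is correct and follows essentially the same route as the paper: part~(1) is the same telescoping substitution, and part~(2) rests on the same two ingredients, namely Lemma~\ref{lem:alternating}(2) and the identification $P = \{N\mathbf{1} : N \in {\cal C}\}$ from~(\ref{eqn:P}). The one small difference is in the ``$\ge$'' direction of (2): the paper picks a minimizer $(M^*,N^*)$ and uses strict convexity to argue $M^* = R(N^*)$, whereas you bound $\inf_{M\in{\cal R}} D(M\|N)$ for each fixed $N$ before taking the infimum over $N$, which avoids invoking existence of a minimizer; this is a mild simplification rather than a different approach.
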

\begin{proof}
	(1). By $R(N)_{ij} = (r_i/N {\bf 1})_i N_{ij}$, we have
	$D(N\| R(N)) = \sum_{i,j} N_{ij} \log N_{ij}/R(N_{ij}) =  \sum_{i,j} N_{ij} \log (N{\bf 1})_i/r_i = \sum_{i} (N{\bf 1})_i \log (N{\bf 1})_i/r_i = D(N{\bf 1} \| r)$.
	
	(2). 
	By Lemma~\ref{lem:alternating}~(2) and the strict convexity of $M \mapsto D(N^*\| M)$, 
	it must hold $M^* = R(N^*)$. By (1) and Lemma~\ref{lem:alternating}, 
	we have $\inf_{p \in P}D(p\|r) = \inf_{N \in {\cal C}} D(N\|R(N)) = \inf_{M \in {\cal R},N \in {\cal C}} D(N\|M) = D(N^* \| M^*) = D(N^* \| R(N^*)) 
		= D(N^*{\bf 1} \| r) = D(p^* \|r)$.
\end{proof}

Now we have the following convergence estimate of $p_k$.
\begin{Lem}\label{lem:sublinear2}
	For a positive integer $\ell$, it holds 
	\begin{equation*}
		\frac{1}{2C} \| p^{(\ell)} - p^*\|_1^2 \leq D(p^{(\ell)}\|p^*) \leq  \frac{D(N^* \|(C/R)M_0)}{\ell}.
	\end{equation*}
\end{Lem}

\begin{proof}
The first inequality is Pinsker's inequality  (Lemma~\ref{lem:Pinsker}). 	
By Lemmas~\ref{lem:sublinear} and \ref{lem:D(r|p)}, we have
\begin{equation}\label{eqn:D(p|r)}
	D(p^{(\ell)}\| r) - D(p^* \| r) \leq D(N^* \| (C/R)M_0)/\ell.
\end{equation}
The convex polytope $P$ belongs to $\{p \in \RR^n \mid p^{\top} {\bf 1}= C, p_i \geq 0 \ (i \in [n])\}$.
By Pythagoras theorem (Lemma~\ref{lem:Pythagorean}), 
we have $D(p^{(\ell)}\|p^*) \leq$ LHS in (\ref{eqn:D(p|r)}).
\end{proof}

\begin{Rem}
	Alternating minimization of this kind is studied in the literature of {\em first order methods}---a current active area of optimization; see \cite[Chapter 14]{Beck_FirstOrderMethods}.
	Since KL-divergence does not satisfy the $L$-smoothness assumption,
	we could not find from the literature results giving better estimates.
\end{Rem}

\begin{Rem}
	In the case of ${\cal R} \cap {\cal C} \neq \emptyset$, 
	the Sinkhorn algorithm is also viewed as 
	alternating minimization of minimizing KL-divergence 
	from $A$ to  ${\cal R} \cap {\cal C}$: 
	\begin{equation}
		{\rm inf.} \quad D(M \| A) \quad {\rm s.t.} \quad M \in {\cal R} \cap {\cal C}. 
	\end{equation}
    Further, it is known \cite{Csiszar1975} that the Sinkhorn limit $M^* = N^*$ is 
	the unique minimizer; see also \cite[Section 3.4]{Idel_survey}.
	Based on this formulation, Chakrabarty and Khanna~\cite{CK2021} obtained
	convergence results on the Sinkhorn algorithm.   
	They showed \cite[Lemma 2.2]{CK2021} in the proof that the KL-divergence from $M^* = N^*$ 
	to $M_k$ ($N_k$) decreases 
	by the same quantity of RHS in~(\ref{eqn:decrement}).
	Such an estimate for nonscalable setting, which we could not obtain,  
	may improve our result for finding extreme Hall blockers in Section~\ref{subsec:finding_2}. 
\end{Rem}

\section{Polymatroidal analysis on the Sinkhorn limit}\label{sec:limit}
In this section, we study the Sinkhorn limit $(M^*,N^*)$ in more detail. 
Particularly, we exhibit the block-triangular structure of $(M^*,N^*)$ (Aas~\cite{Aas2014}) in a refined form, and 
provide an explicit formula of the marginal limit $p^*$.
Although some of results (Theorems~\ref{thm:N*1} and \ref{thm:limit}) can be deduced by refining the arguments in Aas~\cite{Aas2014}, 
we here present different and self-contained proofs from polymatroidal viewpoints.
This establishes a new link with 
DM-decomposition, principle partition, and parametric stable sets in bipartite graph; 
see the last paragraph of this section.

The polytope $P$ in (\ref{eqn:P})  
is also viewed as the set 
of all vectors $p \in \RR^n$ for which 
$A$ is approximately $(p,c)$-scalable. 
That is, by Theorem~\ref{thm:RS} (1), it holds
\begin{equation}\label{eqn:P1}
P = \{ p \in \RR^n \mid p(X) + c(Y) \leq C \ (X \sqcup Y \in {\cal S}), p([n])=C \}.
\end{equation}
One can see the equivalence between  (\ref{eqn:P}) and (\ref{eqn:P1})
by a standard argument of network flow:
Regard each edge $ij$ in $G$ a directed edge from $i$ to $j$ having infinite capacity. 
Add source $s$ and sink $t$ together with directed edges $si$ $(i \in [n])$ and $jt$ $(j \in [m])$.  The capacity of $si$ is defined as $p_i$ and the capacity of $jt$ is defined as $c_i$.  Let $\vec G(A,p,c)$ denote the resulting network.
From $N$ in (\ref{eqn:P}). we obtain an $(s,t)$-flow in $\vec G(A,p,c)$ 
such that the flow-values of edges $si, ij, jt$ are 
$p_i, N_{ij}, c_j$, respectively.
In this way, $N$ can be identified with a (maximum) flow of the flow-value $C = p([n])$. 
Also a stable set $X \sqcup Y \in {\cal S}$ is identified with an $(s,t)$-cut $\{s\} \cup X \cup ([m] \setminus Y)$ with finite capacity $p([n]) - p(X) + C - c(Y)$. 
Then, by the max-flow min-cut theorem,
the condition $p(X) + c(Y) \leq C$ $(X \sqcup Y \in {\cal S})$ with $p([n])=C$ 
 is necessarily and sufficient for $p$ to be represented as $p = N{\bf 1}$.

As seen in Lemma~\ref{lem:D(r|p)}, 
the value $D(N^* \| M^*)$ is given by optimization problem:
\begin{equation}\label{eqn:polymatroid_optimization}
{\rm inf}. \quad D(p \| r) \quad {\rm s.t.} \quad p \in P.
\end{equation}  
We are going to give an explicit formula of the (unique) optimal solution $p^*$.
As mentioned after Theorem~\ref{thm:RS},   
the inequality system for $P$ can be written as
\begin{equation}\label{eqn:inequality_system}
p(X) \leq  c(\Gamma(X)) \ (X \subseteq [n]), \quad p([n]) = c (\Gamma([n])) (= C). 
\end{equation}
Observe that $X \mapsto c(\Gamma(X))$ is a monotone submodular function, 
i.e., $c(\Gamma(X))+ c(\Gamma(Y)) \geq c(\Gamma(X \cup Y))+ c(\Gamma(X \cap Y))$ for $X,Y \subseteq [n]$
and $c(\Gamma(X)) \subseteq c(\Gamma(Y))$ if $X \subseteq Y$.
Therefore, $P$ is the {\em base polytope} of the {\em polymatroid}; see \cite{FujiBook}.

Consider a map ${\cal S} \to \RR^2$ by
\begin{equation}\label{eqn:map}
X \sqcup Y \mapsto (r(X),c(Y)) \quad (X \sqcup Y \in {\cal S}),
\end{equation}
and consider the convex hull, denoted by $\Conv {\cal S}$,  
of points $(r(X),c(Y))$ over all stable sets $X \sqcup Y \in {\cal S}$.
The convex polygon $\Conv {\cal S}$ is contained by box  $[R, 0] \times [0, C]$ and
 contains $(0,0)$, $(R,0)$, $(0,C)$ as extreme points. 
A stable set $X \sqcup Y$ is called {\em extreme} 
if $(r(X),c(Y))$ is a nonzero extreme point of $\Conv {\cal S}$.  
See the left of Figure~\ref{fig:diagram}.
	\begin{figure}[t]
	\begin{center}
		\includegraphics[scale=0.45]{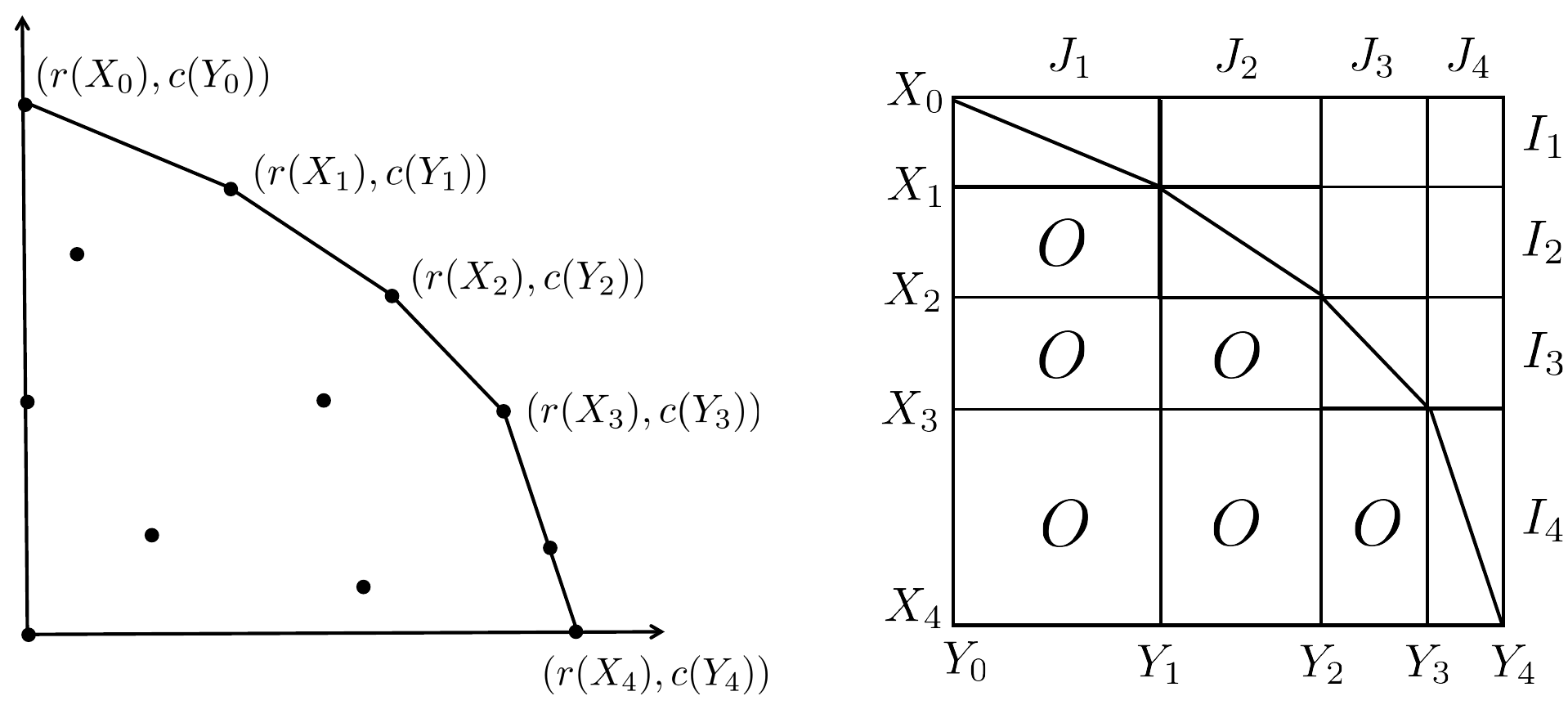}
		\caption{$\Conv {\cal S}$ in $(y,x)$-plane (left) 
			and the associated block-triangularization of $A$ (right)}
		\label{fig:diagram}
	\end{center}
\end{figure}\noindent
\begin{Lem}\label{lem:extreme}
	For extreme stable sets $X \sqcup Y$, $X' \sqcup Y'$, if  
    $r(X) \geq r(X')$ and $c(Y) \leq c(Y')$, then
	$X \supseteq X'$ and $Y \subseteq Y'$.
	In particular, the map (\ref{eqn:map}) is injective on 
	the family of extreme stable sets.
\end{Lem}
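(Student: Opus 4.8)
The plan is to combine three ingredients. First, ${\cal S}$ is closed under the two uncrossing operations $\big(X\sqcup Y,\ X'\sqcup Y'\big)\mapsto (X\cup X')\sqcup(Y\cap Y')$ and $\mapsto (X\cap X')\sqcup(Y\cup Y')$, since an edge from $X\cup X'$ to $Y\cap Y'$, or from $X\cap X'$ to $Y\cup Y'$, would already violate stability of one of the two given sets. Writing $\psi(X\sqcup Y):=(r(X),c(Y))$, the second ingredient is modularity of $r,c$: $\psi\big((X\cup X')\sqcup(Y\cap Y')\big)+\psi\big((X\cap X')\sqcup(Y\cup Y')\big)=\psi(X\sqcup Y)+\psi(X'\sqcup Y')$. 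The third, and the crux, is that every extreme stable set other than $[n]\sqcup\emptyset$ and $\emptyset\sqcup[m]$ is the \emph{unique} maximizer over ${\cal S}$ of $\alpha\,r(\cdot)+\beta\,c(\cdot)$ for some $\alpha,\beta>0$.

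For the third ingredient I would argue as follows: if $X\sqcup Y$ is such an extreme stable set, then $\psi(X\sqcup Y)$ is a vertex of the two-dimensional polygon $\Conv{\cal S}\subseteq[0,R]\times[0,C]$ different from $(0,0),(R,0),(0,C)$. A short case check --- using the rectangle $[0,r(X'')]\times[0,c(Y'')]$, whose four corners all lie in $\Conv{\cal S}$ so that the whole rectangle does --- shows that $\psi(X\sqcup Y)$ is not dominated coordinatewise by any distinct $\psi(X''\sqcup Y'')$, i.e.\ it is Pareto-maximal in $\Conv{\cal S}$. Separating $\Conv{\cal S}$ from $\psi(X\sqcup Y)+\RR^2_{\ge 0}$ then gives a nonzero \emph{nonnegative} normal vector at $\psi(X\sqcup Y)$, which --- because the vertex is not $(R,0)$ or $(0,C)$ --- cannot lie on a coordinate axis and hence is strictly positive, and, the normal cone at a vertex of a $2$-dimensional polytope being full-dimensional, it can be moved into its relative interior while staying strictly positive. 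This produces $\alpha,\beta>0$ with $X\sqcup Y$ the unique maximizer of $\alpha\,r(\cdot)+\beta\,c(\cdot)$ over ${\cal S}$.

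Then I would run the proof itself. If $X\sqcup Y=[n]\sqcup\emptyset$ or $X'\sqcup Y'=\emptyset\sqcup[m]$ the conclusion is vacuous, and, using $r,c>0$, the hypotheses $r(X)\ge r(X')$, $c(Y)\le c(Y')$ force the other two borderline cases to collapse to $X\sqcup Y=X'\sqcup Y'$, again trivial. Otherwise fix strictly positive witnesses $(\alpha,\beta)$ for $X\sqcup Y$ and $(\alpha',\beta')$ for $X'\sqcup Y'$ as above, suppose for contradiction that $X\not\supseteq X'$ or $Y\not\subseteq Y'$, and put $p=r(X\setminus X')$, $q=r(X'\setminus X)$, $s=c(Y\setminus Y')$, $t=c(Y'\setminus Y)$. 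Rewriting $r(X)-r(X')$ and $c(Y')-c(Y)$ gives $p\ge q\ge 0$ and $t\ge s\ge 0$, and the assumption gives $q>0$ or $s>0$. If $s=0$ then $q>0$ and $Y\cap Y'=Y$, so $(X\cup X')\sqcup Y\in{\cal S}$ differs from $X\sqcup Y$ and the $(\alpha,\beta)$-witness gives $r(X)>r(X\cup X')$, impossible as $X\subseteq X\cup X'$; symmetrically $q=0$ is impossible, via the $(\alpha',\beta')$-witness at $X'\sqcup(Y\cup Y')$. Hence $q,s>0$, so also $p,t>0$, and each of $(X\cup X')\sqcup(Y\cap Y')$ and $(X\cap X')\sqcup(Y\cup Y')$ differs from $X\sqcup Y$ and from $X'\sqcup Y'$. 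Evaluating both witnesses at both of these stable sets and simplifying with modularity yields
\[
\alpha q<\beta s,\qquad \beta t<\alpha p,\qquad \alpha' p<\beta' t,\qquad \beta' s<\alpha' q .
\]
Multiplying the first two (all quantities being positive) gives $qt<ps$, multiplying the last two gives $ps<qt$, and together these are absurd. Hence $X\supseteq X'$ and $Y\subseteq Y'$. For the injectivity claim, if two extreme stable sets have equal $\psi$-value, applying the above with the two sets in both orders gives $X\supseteq X'\supseteq X$ and $Y\subseteq Y'\subseteq Y$, so the sets coincide.

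I expect the only genuinely delicate step to be the existence of a \emph{strictly} positive exposing functional for an extreme stable set; the rest is uncrossing bookkeeping and the elementary multiplicative contradiction. The two corners $[n]\sqcup\emptyset$ and $\emptyset\sqcup[m]$ must be set aside first precisely because at those vertices the exposing functional can only be taken nonnegative.
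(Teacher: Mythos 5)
Your argument is correct, but it follows a genuinely different route from the paper's. The paper first reduces to the case where $(r(X),c(Y))$ and $(r(X'),c(Y'))$ are equal or adjacent extreme points of $\Conv{\cal S}$; it then uses the same uncrossing and modularity to note that $(r(X\cup X'),c(Y\cap Y'))$ and $(r(X\cap X'),c(Y\cup Y'))$, both lying in $\Conv{\cal S}$, average to the midpoint of the boundary edge joining the two given vertices, so by convexity both must lie on that edge; since $(r(X\cup X'),c(Y\cap Y'))$ sits weakly to the lower-right of $(r(X),c(Y))$, it must coincide with that endpoint, and positivity of $r,c$ then forces $X=X\cup X'$ and $Y=Y\cap Y'$. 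You replace this midpoint-on-an-edge argument with strictly positive exposing functionals (one per non-corner extreme stable set) and a four-inequality multiplicative contradiction $qt<ps<qt$. Both are sound; the paper's is shorter and avoids the separation/normal-cone machinery, while yours treats an arbitrary pair directly without the preliminary reduction to adjacent vertices. One precision worth flagging: the separation argument only shows that $(\alpha,\beta)$ exposes the \emph{point} $\psi(X\sqcup Y)$ uniquely on $\Conv{\cal S}$, not that $X\sqcup Y$ is the unique maximizing \emph{stable set} in ${\cal S}$ --- the latter would presuppose exactly the injectivity you are trying to prove. Your subsequent usage is nonetheless correct, because in every place where you invoke strict inequality the competing uncrossed stable set has a strictly different $\psi$-value (since $q,s>0$, or $q>0$ with $s=0$, etc.), so the strictness you need really does follow from uniqueness at the level of $\psi$-values.
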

\begin{proof}
	It suffices to consider 
	the case where 
	points $(r(X),c(Y))$ and $(r(X'),c(Y'))$ are equal or adjacent extreme points.
	Observe that $(X \cap X') \sqcup (Y \cup Y')$ 
	and  $(X \cup X') \sqcup (Y \cap Y')$ are both stable, and that
	\[
	 (r(X), c(Y)) + (r(X'), c(Y')) = (r(X \cap X'), c(Y \cup Y')) 
	 + (r(X \cup X'), c(Y \cap Y')).
	\]
	This means that the point 
	of the RHS divided by $2$ 
	is equal to
	the midpoint of the edge between $(r(X),c(Y))$ and $(r(X'),c(Y'))$.
	Necessarily $(r(X), c(Y)) = (r(X \cup X'), c(Y \cap Y'))$ must hold 
	otherwise $(r(X \cap X'), c(Y \cup Y'))$ 
	or $(r(X \cup X'), c(Y \cap Y'))$ goes outside of $\Conv {\cal S}$. 
	Since $r$ and $c$ are positive vectors, 
	it must hold $X = X \cup X'$ and $Y = Y \cap Y'$, as required.
\end{proof}
Consider all extreme stable sets $X_{\kappa} \sqcup Y_{\kappa}$ $(\kappa=0,1,2,\ldots,\theta)$.
Since $c$ is a positive vector, it necessarily holds
\begin{equation}\label{eqn:[m]-Gamma(X)}
 Y_{\kappa} = [m] \setminus \Gamma(X_{\kappa}).
\end{equation} 
By Lemma~\ref{lem:extreme}, they can be ordered as
\begin{eqnarray*}
&&	[n] = X_0 \supset X_1 \supset \cdots \supset X_{\theta} = \emptyset,  \\
&&   \emptyset = Y_0 \subset Y_1 \subset \cdots \subset Y_{\theta} = [m].
\end{eqnarray*}
Let $(I_1,I_2,\ldots,I_{\theta})$ and $(J_1,J_2,\ldots,J_{\theta})$ be partitions of 
$[n]$ and $[m]$, respectively, defined by
\begin{equation}\label{eqn:partition}
I_{\kappa} := X_{\kappa-1}/X_{\kappa},\quad J_{\kappa} := Y_{\kappa}/Y_{\kappa-1} \quad (\kappa = 1,2,\ldots,\theta).
\end{equation}
Notice that $A[X,Y] = O$ if and only if $X \sqcup Y \in {\cal S}$.
We can arrange $A$ so that
these submatrices $A[I_\kappa, J_{\kappa}]$ are placed in diagonal positions and
their lower left blocks are all zero matrices. 
See the right of Figure~\ref{fig:diagram}.

Let $r^\kappa$ denote the $|I_{\kappa}|$-dimensional 
vector obtained by restricting $r$ 
to the indices in $I_{\kappa}$. 
Similarly, let $c^\kappa$ denote the restriction of $c$ 
to $J_{\kappa}$.  
Let $R_\kappa := r(I_{\kappa})$ and $C_\kappa := c(J_\kappa)$ denote the total sums of $r^\kappa$ and $c^\kappa$, respectively.
\begin{Lem}\label{lem:DM}
	\begin{itemize}
		\item[(1)] Each submatrix $A[I_{\kappa},J_{\kappa}]$ is approximately $(r^\kappa/R_\kappa, c^\kappa/C_\kappa)$-scalable.
		\item[(2)] It holds $R_1/C_1 < R_2/C_2 < \cdots < R_\theta / C_\theta$.
	\end{itemize}	
\end{Lem}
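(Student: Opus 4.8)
The plan is to derive both statements from the convex geometry of the polygon $\Conv {\cal S}$, using its supporting line along each boundary edge joining two consecutive extreme points. The edge of $\Conv{\cal S}$ joining the consecutive nonzero extreme points $(r(X_{\kappa-1}),c(Y_{\kappa-1}))$ and $(r(X_{\kappa}),c(Y_{\kappa}))$ has displacement vector $(r(X_{\kappa})-r(X_{\kappa-1}),\,c(Y_{\kappa})-c(Y_{\kappa-1})) = (-R_{\kappa},C_{\kappa})$, so $(C_{\kappa},R_{\kappa})$ is a normal of this edge pointing away from the origin. Since $\Conv{\cal S} \subseteq [0,R]\times[0,C]$ and $(0,0),(R,0),(0,C)$ are extreme points of $\Conv{\cal S}$, the whole polygon lies on the origin side of this edge, so
\begin{equation}\label{eqn:suppA}
C_{\kappa}\, r(X) + R_{\kappa}\, c(Y) \le C_{\kappa}\, r(X_{\kappa-1}) + R_{\kappa}\, c(Y_{\kappa-1}) \qquad (X \sqcup Y \in {\cal S}).
\end{equation}

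For (1), by Theorem~\ref{thm:RS}~(1) it suffices to note that $r^{\kappa}/R_{\kappa}$ and $c^{\kappa}/C_{\kappa}$ have equal total mass (both equal to $1$) and to verify $r(X)/R_{\kappa} + c(Y)/C_{\kappa} \le 1$ for every stable set $X \sqcup Y$ of $G(A[I_{\kappa},J_{\kappa}])$. Such a set satisfies $X \subseteq I_{\kappa} \subseteq X_{\kappa-1}$, $Y \subseteq J_{\kappa} \subseteq Y_{\kappa}$, $X \cap X_{\kappa} = \emptyset$, and $Y \cap Y_{\kappa-1} = \emptyset$. I would then check that $X' := X \cup X_{\kappa}$ and $Y' := Y \cup Y_{\kappa-1}$ form a stable set of $G(A)$---there is no edge between $X$ and $Y_{\kappa-1}$, nor between $X_{\kappa}$ and $Y$, nor between $X_{\kappa}$ and $Y_{\kappa-1}$, since $X_{\kappa-1}\sqcup Y_{\kappa-1}$ and $X_{\kappa}\sqcup Y_{\kappa}$ are stable while $X\subseteq X_{\kappa-1}$ and $Y\subseteq Y_{\kappa}$, and no edge between $X$ and $Y$ by hypothesis---and then substitute $(r(X'),c(Y'))=(r(X)+r(X_{\kappa-1})-R_{\kappa},\ c(Y)+c(Y_{\kappa-1}))$ into~(\ref{eqn:suppA}); cancelling common terms yields $C_{\kappa}\, r(X)+R_{\kappa}\, c(Y)\le C_{\kappa} R_{\kappa}$, i.e., $r(X)/R_{\kappa} + c(Y)/C_{\kappa} \le 1$. (Taking $X=\{i\}$ and $Y=J_{\kappa}$ here also rules out zero rows of $A[I_{\kappa},J_{\kappa}]$, and symmetrically zero columns, so that Theorem~\ref{thm:RS}~(1) indeed applies to $A[I_{\kappa},J_{\kappa}]$.)

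For (2), assuming $\theta\ge 2$, I would use that as $\kappa$ runs from $1$ to $\theta$ along the boundary chain of $\Conv{\cal S}$ from $(R,0)$ to $(0,C)$, the edge displacement vectors $(-R_{\kappa},C_{\kappa})$ rotate strictly counterclockwise---because $\Conv{\cal S}$ is a convex polygon whose nonzero extreme points, by definition of the $X_{\kappa}\sqcup Y_{\kappa}$, are exactly these points taken in boundary order. Concretely, the turn at the vertex $(r(X_{\kappa}),c(Y_{\kappa}))$ is a strict left turn, i.e., the determinant with rows $(-R_{\kappa},C_{\kappa})$ and $(-R_{\kappa+1},C_{\kappa+1})$ is positive; this simplifies to $C_{\kappa} R_{\kappa+1}-R_{\kappa} C_{\kappa+1}>0$, equivalently $R_{\kappa}/C_{\kappa} < R_{\kappa+1}/C_{\kappa+1}$. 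The same fact can be phrased without determinants: extremeness of $X_{\kappa}\sqcup Y_{\kappa}$ forces $(r(X_{\kappa}),c(Y_{\kappa}))$ to lie strictly on the far side, away from the origin, of the segment joining $(r(X_{\kappa-1}),c(Y_{\kappa-1}))$ and $(r(X_{\kappa+1}),c(Y_{\kappa+1}))$, and expanding this using $r(X_{\kappa})=r(X_{\kappa-1})-R_{\kappa}$ and $c(Y_{\kappa})=c(Y_{\kappa-1})+C_{\kappa}$ gives the same inequality.

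The main obstacle is not a hard computation but making the convex-geometry bookkeeping airtight: one must argue carefully that $(C_{\kappa},R_{\kappa})$ is indeed an outward normal of $\Conv{\cal S}$ (equivalently, that~(\ref{eqn:suppA}) holds), and that the sets $X_{\kappa}$ of Lemma~\ref{lem:extreme} index consecutive vertices of the boundary chain of $\Conv{\cal S}$ from $(R,0)$ to $(0,C)$ with no extreme point skipped. Both follow from the positivity of $r$ and $c$ together with Lemma~\ref{lem:extreme} (the extreme stable sets are totally ordered, with $r(X_{\kappa})$ strictly decreasing and $c(Y_{\kappa})$ strictly increasing as $\kappa$ increases), but these facts deserve an explicit argument rather than being read off Figure~\ref{fig:diagram}.
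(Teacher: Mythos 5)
Your proof is correct and follows essentially the same route as the paper: for (1) both arguments hinge on the stable set $(X\cup X_{\kappa})\sqcup(Y\cup Y_{\kappa-1})$ and the supporting line of $\Conv{\cal S}$ along the edge joining $(r(X_{\kappa-1}),c(Y_{\kappa-1}))$ and $(r(X_{\kappa}),c(Y_{\kappa}))$ (the paper phrases it as a contradiction, you phrase it directly), and for (2) both read off the strict monotonicity of the edge slopes of the convex polygon. Your version merely spells out the convex-geometry bookkeeping that the paper leaves implicit.
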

\begin{proof}
	(1). Suppose otherwise.
	Then there is a stable set $X \sqcup Y \subseteq I_{\kappa} \sqcup J_{\kappa}$ 
	such that $(1/R_{\kappa})r^\kappa(X) + (1/C_{\kappa})c^\kappa(Y) > 1$.
	Notice that
	the equation of the line through extreme points $(r(X_{\kappa-1}), c(Y_{\kappa-1}))$ 
	and $(r(X_{\kappa}), c(Y_{\kappa}))$ in the $(x,y)$-plane $\RR^2$ is 
	given by $(1/R_{\kappa})(x-r(X_{\kappa})) + (1/C_{\kappa})(y-c(Y_{\kappa-1})) = 1$. 
	Then $(X \cup X_{\kappa}) \sqcup (Y \cup Y_{\kappa-1})$ is a stable set
	in the original graph $G$ such that the point $(r(X \cup X_{\kappa}), c(Y \cup Y_{\kappa-1}))$ is outside of $\Conv {\cal S}$.
	This is a contradiction.
	
	(2). It is clear from the observation 
	that $- R_\kappa/C_\kappa$ is a slope of the edge 
	between extreme points 
	$(r(X_{\kappa-1}),c(Y_{\kappa-1}))$ and $(r(X_{\kappa}), c(Y_{\kappa}))$ in the convex polygon $\Conv {\cal S}$.  
\end{proof}

Define $\rho^* \in \RR^n$ by $\rho^*_i := (C_{\kappa}/R_{\kappa}) r_i$ 
for $i \in I_{\kappa}$. 
By an appropriate ordering, $\rho^*$ is written as
\begin{equation}\label{eqn:r*}
\rho^* = 
\left(
\begin{array}{c}
(C_1/R_1) r^1 \\
(C_2/R_2) r^2 \\
\vdots \\
(C_\theta/R_\theta) r^\theta
\end{array}
\right).
\end{equation}

\begin{Prop}\label{prop:minimizer}
	$\rho^*$ is the unique minimizer of (\ref{eqn:polymatroid_optimization}). 
\end{Prop}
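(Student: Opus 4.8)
The plan is to show that $p^*$ defined by $p^*_i := (C_\kappa/R_\kappa) r_i$ for $i \in I_\kappa$ both lies in the base polytope $P$ and is the unique optimal solution of $\inf\{ D(r\|p) : p \in P\}$. I would proceed in three stages: feasibility, optimality via the first-order (KKT / variational) conditions for the convex program over the polymatroid base polytope, and uniqueness.

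First I would check $p^* \in P$, i.e. that $p^*$ satisfies the inequality system~(\ref{eqn:inequality_system}): $p^*(X) \le c(\Gamma(X))$ for all $X \subseteq [n]$ and $p^*([n]) = C$. The equality is immediate since $p^*(I_\kappa) = (C_\kappa/R_\kappa) r(I_\kappa) = C_\kappa$ and $\sum_\kappa C_\kappa = C$. For the inequalities, I would use that each block $A[I_\kappa,J_\kappa]$ is approximately $(r^\kappa/R_\kappa, c^\kappa/C_\kappa)$-scalable by Lemma~\ref{lem:DM}(1), which by Theorem~\ref{thm:RS}(1) means $(1/R_\kappa) r^\kappa(X) + (1/C_\kappa) c^\kappa(Y) \le 1$ for every stable $X \sqcup Y$ inside $I_\kappa \sqcup J_\kappa$; equivalently $p^*$ restricted to $I_\kappa$ is in the base polytope of the submodular function $X \mapsto c^\kappa(\Gamma(X) \cap J_\kappa)$ on $I_\kappa$. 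One then decomposes an arbitrary $X \subseteq [n]$ as $X = \bigsqcup_\kappa (X \cap I_\kappa)$ and, using the block-triangular structure (lower-left blocks of $A$ are zero, so $\Gamma(X \cap I_\kappa) \supseteq$ the part of $\Gamma$ landing in $J_{\kappa'}$ for $\kappa' \ge \kappa$), sums the block inequalities to recover $p^*(X) \le c(\Gamma(X))$.

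Next, optimality. The function $D(r\|\cdot)$ is strictly convex on the interior of the nonnegative orthant, so minimizing it over the base polytope $P$ has a unique minimizer characterized by a standard optimality condition for convex minimization over polymatroid base polytopes: $p$ is optimal iff for every pair $i,j$ such that moving mass from coordinate $j$ to coordinate $i$ stays feasible (i.e. $i$ is not saturated / $j$ is not at its lower bound relative to some tight set), the partial-derivative inequality $\partial_i D \ge \partial_j D$ holds; more cleanly, $p^*$ is optimal iff the coordinates can be grouped so that within each "tight set" layer the ratio $\partial_i D(r\|p^*)$ is constant and these common values are nondecreasing across layers that are nested as tight sets. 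Here $\partial_{p_i} D(r\|p) = -r_i/p_i$, so on $I_\kappa$ we have $-r_i/p^*_i = -R_\kappa/C_\kappa$, a constant, and by Lemma~\ref{lem:DM}(2) these constants $-R_\kappa/C_\kappa$ are \emph{strictly increasing} in $\kappa$. The sets $X_\kappa = I_{\kappa+1} \cup \dots \cup I_\theta$ are exactly the tight sets ($p^*(X_\kappa) = c(\Gamma(X_\kappa))$, since $X_\kappa \sqcup Y_\kappa$ is an extreme stable set). So I would invoke (or quickly prove, via an exchange argument: any feasible perturbation $p^* + t(\mathbf{1}_i - \mathbf{1}_j)$ with $t>0$ forces $i$ to sit in an earlier layer than $j$, whence $\partial_i D < \partial_j D$ and $D$ strictly increases) that this gradient-monotonicity-across-tight-sets condition is exactly the optimality condition, concluding that $p^*$ is the minimizer.

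Uniqueness then follows for free from strict convexity of $D(r\|\cdot)$ along any line segment inside the orthant, so the minimizer over the convex set $P$ is unique. The step I expect to be the real obstacle is \emph{formulating and justifying the optimality condition} (the "principal partition / polymatroid" characterization) at the right level of rigor: one must argue that the feasible directions at $p^*$ are generated by vectors $\mathbf{1}_i - \mathbf{1}_j$ with $i$ in a later layer than $j$ (this is where the structure of tight sets of a polymatroid base polytope, and the fact that $(X_\kappa)$ are precisely the tight sets at $p^*$, enter), and then check $\langle \nabla D(r\|p^*),\, \mathbf{1}_i - \mathbf{1}_j\rangle \ge 0$ for all such directions using Lemma~\ref{lem:DM}(2). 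The feasibility sum in stage one and the uniqueness in stage three are routine once the block structure and strict convexity are in hand.
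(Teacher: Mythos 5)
Your proof is correct in outline, but it differs from the paper's at both the feasibility and the optimality steps, and the two approaches have different trade-offs. For feasibility, the paper does not verify the inequality system $p^*(X)\le c(\Gamma(X))$ directly; instead it uses the network-flow characterization $(\ref{eqn:P})$ of $P$ and exhibits an explicit matrix $B=\bigoplus_\kappa C_\kappa B_\kappa\in{\cal C}$ with $B{\bf 1}=p^*$, each $B_\kappa$ coming from approximate scalability of the $\kappa$-th block. Your direct decomposition $X=\bigsqcup_\kappa (X\cap I_\kappa)$ also works, and in fact more simply than you anticipate: since the sets $\Gamma(X\cap I_\kappa)\cap J_\kappa$ lie in disjoint $J_\kappa$'s and are each contained in $\Gamma(X)$, you get $\sum_\kappa c^\kappa(\Gamma_\kappa(X\cap I_\kappa))\le c(\Gamma(X))$ without any appeal to the block-triangular structure; the "lower-left blocks are zero" remark is unnecessary for this direction of the inequality. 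For optimality, the paper writes out the KKT multipliers explicitly via an Abel summation, obtaining $-\nabla D(r\|p^*)=\frac{R_1}{C_1}{\bf 1}+\sum_{\kappa=1}^{\theta-1}\bigl(\frac{R_{\kappa+1}}{C_{\kappa+1}}-\frac{R_\kappa}{C_\kappa}\bigr){\bf 1}_{X_\kappa}$, which together with the tightness $p^*(X_\kappa)=c(\Gamma(X_\kappa))$ is exactly the KKT certificate; this is more self-contained than your route, which as you yourself flag requires you to establish that the feasible cone at $p^*$ is generated by directions ${\bf 1}_i-{\bf 1}_j$ with $j$ in a weakly later layer---a true but nontrivial polymatroid fact the paper sidesteps entirely. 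Also note a sign slip in your exchange argument: for $i$ in an earlier layer than $j$ you have $\partial_i D=-R_\kappa/C_\kappa > -R_\lambda/C_\lambda=\partial_j D$, not $<$; it is precisely $\partial_i D > \partial_j D$ that makes the directional derivative along ${\bf 1}_i-{\bf 1}_j$ nonnegative and gives optimality. The uniqueness step is identical to the paper's.
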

\begin{proof}
We first verify that $\rho^*$ is a point in $P$. 
By Lemma~\ref{lem:DM}~(1) and (\ref{eqn:P})$(=$(\ref{eqn:P1})), for each $\kappa$, there is 
$|I_{\kappa}| \times |J_{\kappa}|$ matrix $B_{\kappa}$ 
such that $(B_{\kappa})_{ij} = 0$ for each $i,j$ with $(A[I_\kappa,J_\kappa])_{ij} = 0$, $B_{\kappa} {\bf 1} = r^{\kappa}/R_{\kappa}$, and $B_{\kappa}^{\top} {\bf 1} = c^{\kappa}/C_\kappa$. Let $B = \bigoplus_{\kappa} C_{\kappa} B_{\kappa}$ 
be the block diagonal matrix with diagonals $C_{\kappa} B_{\kappa}$. Then $B \in {\cal C}$, and $\rho^* = B{\bf 1}$.

We next verify that $\rho^*$ satisfies 
the KKT-condition for (\ref{eqn:polymatroid_optimization}).  
The gradient $\nabla D(p \| r) = (\partial D(p \| r)/ \partial p_i )_i$  at $\rho^*$ is given by
\begin{eqnarray}
&& \nabla D(\rho^* \| r) = {\bf 1} - \sum_{i} \log \frac{r_i}{\rho^*_i} {\bf 1}_{\{i\}}
= {\bf 1} - \sum_{\kappa=1}^\theta \sum_{i \in I_{\kappa}} \log \frac{R_\kappa}{C_\kappa} {\bf 1}_{\{i\}} \nonumber \\
&&= {\bf 1} - \sum_{\kappa=1}^{\theta} \log \frac{R_\kappa}{C_\kappa} 
({\bf 1}_{X_{\kappa-1}} - {\bf 1}_{X_{\kappa}}) \nonumber\\ 
&& = \left(1 -  \log \frac{R_1}{C_1}\right) {\bf 1} -  \sum_{\kappa=1}^{\theta-1} \left ( \log \frac{R_{\kappa+1}}{C_{\kappa+1}} - \log \frac{R_{\kappa}}{C_{\kappa}} \right) {\bf 1}_{X_\kappa}. \label{eqn:KKT}
\end{eqnarray}
Also $\rho^*$ satisfies 
the following inequalities from (\ref{eqn:inequality_system}) in equality:
\begin{equation*}
({\bf 1}_{X_\kappa})^{\top} \rho^* = \sum_{\lambda=\kappa+1}^{\theta} \rho^*(I_{\lambda}) 
= \sum_{\lambda=\kappa+1}^{\theta} (C_\lambda/R_{\lambda})R_\lambda 
= \sum_{\lambda = \kappa+1}^{\theta} c(J_{\lambda}) = c([m] \setminus Y_{\kappa}) = c(\Gamma(X_{\kappa})),
\end{equation*}
where we use (\ref{eqn:[m]-Gamma(X)}) in the last equality.
In particular, $\rho^*$ fulfills the equality constraint ${\bf 1}^{\top} \rho^* = C$ of $P$.
By Lemma~\ref{lem:DM} (2), it holds $\log (R_{\kappa+1}/C_{\kappa+1}) - \log(R_{\kappa}/C_{\kappa}) >0$.
Thus, (\ref{eqn:KKT}) says the KKT-condition that 
$- \nabla D(\rho^* \| r)$ is 
a nonnegative combination of ${\bf 1}_{X_\kappa}$ and $\pm {\bf 1}$. 
By strict convexity of $p \mapsto D(p \| r)$ on $P$, it is a unique minimizer.
\end{proof}
Thus we have the following.
\begin{Thm}\label{thm:N*1}
	$\rho^* = p^* (= N^*{\bf 1})$.
\end{Thm}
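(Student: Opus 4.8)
The plan is to combine the two characterizations of the Sinkhorn limit value already established in the excerpt: by Lemma~\ref{lem:D(r|p)}~(2), the infimum of $D(M\|N)$ over ${\cal R}\times{\cal C}$ equals $\inf_{p\in P}D(r\|p)$, and by Proposition~\ref{prop:minimizer} the latter infimum is attained uniquely at $p^*$. So the value $D(M^*\|N^*)$ coincides with $D(r\|p^*)$. What remains is to identify the vector $N^*{\bf 1}$ itself with this unique minimizer $p^*$, not merely to match the optimal values.

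First I would observe that $N^*{\bf 1}\in P$, since $N^*\in{\cal C}$ and by the second description of $P$ in~(\ref{eqn:P}) every element of ${\cal C}$ has its row-marginal lying in $P$. Next, since $(M^*,N^*)$ is a minimizer of~(\ref{eqn:D(M|N)}), Lemma~\ref{lem:alternating}~(2) together with the strict convexity of $M\mapsto D(M\|N^*)$ forces $M^*=R(N^*)$ (this is exactly the argument used in the proof of Lemma~\ref{lem:D(r|p)}~(2)). Then by Lemma~\ref{lem:D(r|p)}~(1),
\[
D(M^*\|N^*) = D(R(N^*)\|N^*) = D(r\,\|\,N^*{\bf 1}).
\]
Hence $N^*{\bf 1}$ is a point of $P$ achieving the value $\inf_{p\in P}D(r\|p)$, i.e.\ it is a minimizer of~(\ref{eqn:polymatroid_optimization}). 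Since Proposition~\ref{prop:minimizer} asserts this minimizer is unique and equal to $p^*$, we conclude $N^*{\bf 1}=p^*$.

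I do not anticipate a serious obstacle here: the theorem is essentially a bookkeeping corollary of the results assembled just before it, and the one nontrivial input — uniqueness of the minimizer of the polymatroid problem, which comes from strict convexity of $-\log$ via the KKT analysis — has already been carried out in Proposition~\ref{prop:minimizer}. The only point requiring a little care is the step $M^*=R(N^*)$: one should note that this does not immediately say $(M^*,N^*)$ is literally the pair reached by the Sinkhorn iteration, only that the $N$-coordinate of \emph{any} optimal pair determines the $M$-coordinate by row-normalization, which is all that is needed to invoke Lemma~\ref{lem:D(r|p)}~(1). With that in hand the identification $N^*{\bf 1}=p^*$ is forced, and the explicit block form~(\ref{eqn:r*}) of $p^*$ then gives the promised description of the row-marginal of the column-normalized Sinkhorn limit.
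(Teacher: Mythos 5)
Your proof is correct and reconstructs exactly the paper's intended (but unstated) argument: $N^*{\bf 1}$ lies in $P$ and achieves $\inf_{p\in P}D(r\|p)$ via the identity $M^*=R(N^*)$ already derived inside the proof of Lemma~\ref{lem:D(r|p)}~(2), and Proposition~\ref{prop:minimizer} identifies this unique minimizer with the explicit $p^*$ of (\ref{eqn:r*}). The paper leaves this as an immediate corollary (noting only that $N_k{\bf 1}\to p^*$ after Lemma~\ref{lem:D(r|p)}), so no new idea is needed beyond the bookkeeping you supply.
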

In particular, it holds ${\bf 1}_{I_{\kappa}}^{\top} N^*{\bf 1}_{[m] \setminus Y_{\kappa-1}} 
= p^{*}(I_{\kappa}) 
= C_{\kappa} = {\bf 1}_{[n]\setminus X_{\kappa}}^{\top} N^*{\bf 1}_{J_{\kappa}}$.
This means that all off-diagonal blocks of $N^*$ are zero, 
i.e., $N^* = \bigoplus_{\kappa} N^{*}[I_{\kappa},J_{\kappa}]$, 
where the diagonal block $N^*[I_{\kappa},J_{\kappa}]$ has marginal 
$((C_{\kappa}/R_{\kappa})r^{\kappa},c^{\kappa})$.
Accordingly, $M^* = R(N^*)$ is also a diagonal matrix $M^* = \bigoplus_{\kappa} M^{*}[I_{\kappa},J_{\kappa}]$ such that 
$M^*[I_{\kappa},J_{\kappa}]$ has marginal 
$(r^{\kappa}, (R_{\kappa}/C_{\kappa})c^{\kappa})$.

Although the preceding argument is enough for our goal (Theorem~\ref{thm:main}),
we continue to study the Sinkhorn limit $(M^*,N^*)$ to give a final form (Theorem~\ref{thm:limit}).
Extreme stable sets are obtained via the following parametric 
optimization for parameter $\lambda \in [0,1]$:
\begin{equation}\label{eqn:parametric}
{\rm max}. \quad (1- \lambda) r(X) + \lambda c(Y) \quad {\rm s.t.} \quad X \sqcup Y \in {\cal S}. 
\end{equation}
This is a minimum $(s,t)$-cut problem on the network $\vec G(A, (1-\lambda)r, \lambda c)$. 
Indeed, $X \sqcup Y$ corresponds to $(s,t)$-cut 
$\{s\} \cup X \cup ([m] \setminus Y)$ with capacity $- (1-\lambda) r(X) - \lambda c(Y) + (1-\lambda)R+ \lambda C$. 
Particularly, (\ref{eqn:parametric}) for all $\lambda$ can be efficiently solved 
by a parametric maximum flow algorithm, e.g., \cite{GGT}.

An extreme stable set is precisely 
a unique maximizer of (\ref{eqn:parametric}) for some $\lambda \in [0,1]$.
A parameter $\lambda$ is said to be {\em critical} if a minimizer is not unique.
The number of critical parameters, i.e., the number of slopes in $\Conv {\cal S}$,  
equals $\theta$. 
Let $\lambda_1, \lambda_2, \ldots \lambda_{\theta}$ be critical parameters ordered as
$0 < \lambda_1 < \lambda_2 < \cdots < \lambda_{\theta} < 1$. 
%
Let ${\cal S}_{\kappa}$ denote the family of maximizers of 
(\ref{eqn:parametric}) for $\lambda = \lambda_{\kappa}$.
Then it is not difficult to see
\begin{itemize}
	\item $X \sqcup Y, X' \sqcup Y' \in {\cal S}_{\kappa} \Longrightarrow (X \cup X') \sqcup (Y \cap Y'), (X \cap X') \sqcup (Y \cup Y')  \in {\cal S}_{\kappa}$.
	\item $X_{\kappa-1} \sqcup  Y_{\kappa-1}, X_{\kappa} \sqcup  Y_{\kappa} \in {\cal S}_{\kappa}$.
	\item $X \sqcup Y \in {\cal S}_{\kappa} \Longrightarrow X_{\kappa-1} \supseteq X \supseteq X_{\kappa}, Y_{\kappa-1} \subseteq Y \subseteq Y_{\kappa}$.
\end{itemize}
Namely, ${\cal S}_{\kappa}$ admits a distributive lattice structure, 
where $(X \sqcup Y) \wedge (X' \sqcup Y') := (X \cup X') \sqcup (Y \cap Y')$
and $(X \sqcup Y) \vee (X' \sqcup Y') := (X \cap X') \sqcup (Y \cup Y')$.
This is essentially a well-known fact that 
the family of minimum $(s,t)$-cuts forms a distributive lattice.

Choose a maximal set (chain) $\{ X_{\kappa,\alpha} \sqcup Y_{\kappa,\alpha}\}_{\alpha=0,1,2,\ldots,\alpha_{\kappa}} \subseteq {\cal S}_{\kappa}$ with the property:
\begin{eqnarray*}
	&&	X_{\kappa-1} = X_{\kappa,0} \supset X_{\kappa,1} \supset \cdots \supset X_{\kappa, \alpha_{\kappa}} = X_{\kappa},  \\
	&&  Y_{\kappa-1} = Y_{\kappa,0} \subset Y_{\kappa,1} \subset \cdots \subset Y_{\kappa, \alpha_{\kappa}} = Y_{\kappa}.
\end{eqnarray*}
This is also efficiently obtained from the residual network of 
$\vec G(A, (1-\lambda_{\kappa}) r, \lambda_{\kappa} c)$ with respect to any maximum flow.
Accordingly, define $I_{\kappa,\alpha}$ and $J_{\kappa,\beta}$ by
\begin{equation}\label{eqn:partition2}
I_{\kappa,\alpha} := X_{\kappa,\alpha-1}/X_{\kappa,\alpha},\quad J_{\kappa,\alpha} := Y_{\kappa,\alpha}/Y_{\kappa,\alpha-1} \quad (\alpha = 1,2,\ldots,\alpha_\kappa).
\end{equation}
Now $\{I_{\kappa,\alpha}\}_{\kappa,\alpha}$ and $\{J_{\kappa,\alpha}\}_{\kappa,\alpha}$  
are partitions of $[n]$ and $[m]$ refining $\{I_{\kappa}\}_{\kappa}$ and $\{J_{\kappa}\}_{\kappa}$, respectively. 
Let $r^{\kappa,\alpha}$ and $c^{\kappa,\alpha}$ denote the restrictions of $r$ and $c$ to $I_{\kappa,\alpha}$ and $J_{\kappa,\alpha}$, respectively.
Again, we arrange $A$ so that 
$A[I_{\kappa,\alpha}, J_{\kappa,\alpha}]$ are diagonal blocks and their lower left blocks are zero. This is a finer block-triangularization of $A$.

The point $(r(X_{\kappa,\alpha}), c(Y_{\kappa, \alpha}))$ lies on 
the segment between extreme points 
$(r(X_{\kappa-1}), c(Y_{\kappa-1}))$ and $(r(X_{\kappa}), c(Y_{\kappa}))$ in $\Conv {\cal S}$.
By the maximality of the chain, 
there is no stable set $X \sqcup Y$ in $I_{\kappa,\alpha} \sqcup J_{\kappa,\alpha}$
such that the stable set $(X_{\kappa,\alpha} \cup X) \sqcup (Y_{\kappa,\alpha-1} \cup Y)$ 
is mapped to the segment between 
$(r(X_{\kappa,\alpha-1}),c(Y_{\kappa,\alpha-1}))$ and $(r(X_{\kappa,\alpha}),c(Y_{\kappa,\alpha}))$.
This implies the condition of Theorem~\ref{thm:RS}~(2):
\begin{Lem} Each submatrix $A[I_{\kappa,\alpha}, J_{\kappa,\alpha}]$ is 
	$(r^{\kappa,\alpha}/R_{\kappa}, c^{\kappa,\alpha}/C_{\kappa})$-scalable.  
\end{Lem}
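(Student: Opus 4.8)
The plan is to verify directly the three requirements of Theorem~\ref{thm:RS}~(2) for the matrix $B := A[I_{\kappa,\alpha},J_{\kappa,\alpha}]$ with the prescribed marginals $\rho := r^{\kappa,\alpha}/R_\kappa$ and $\gamma := c^{\kappa,\alpha}/C_\kappa$. Write $\lambda_\kappa$ for the critical parameter of~(\ref{eqn:parametric}) whose optimal face of $\Conv{\cal S}$ is the edge carrying all the points $(r(X_{\kappa,\beta}),c(Y_{\kappa,\beta}))$ of the chain; this edge has ``slope'' $R_\kappa/C_\kappa$, so level sets of $(1-\lambda_\kappa)x+\lambda_\kappa y$ are parallel to it, giving $(1-\lambda_\kappa)/\lambda_\kappa=C_\kappa/R_\kappa$, and, since every consecutive increment $(-r(I_{\kappa,\beta}),c(J_{\kappa,\beta}))$ is a positive multiple of the edge direction, the identities $(1-\lambda_\kappa)r(I_{\kappa,\beta})=\lambda_\kappa c(J_{\kappa,\beta})$. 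Taking $\beta=\alpha$ and dividing yields $\rho(I_{\kappa,\alpha})=r(I_{\kappa,\alpha})/R_\kappa=c(J_{\kappa,\alpha})/C_\kappa=\gamma(J_{\kappa,\alpha})$, so the total sums of $\rho$ and $\gamma$ agree (in particular $J_{\kappa,\alpha}\neq\emptyset$, so no degeneracy arises).

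The key step is an extension observation. Let $X\sqcup Y$ be a stable set of the bipartite graph of $B$, i.e.\ $X\subseteq I_{\kappa,\alpha}$, $Y\subseteq J_{\kappa,\alpha}$ with no edge of $G(A)$ between them. I claim $\tilde X\sqcup\tilde Y:=(X_{\kappa,\alpha}\cup X)\sqcup(Y_{\kappa,\alpha-1}\cup Y)$ is stable in $G(A)$: each of the cross pairs $X_{\kappa,\alpha}$--$Y_{\kappa,\alpha-1}$, $X_{\kappa,\alpha}$--$Y$, $X$--$Y_{\kappa,\alpha-1}$, $X$--$Y$ is edge-free, using that $X_{\kappa,\alpha-1}\sqcup Y_{\kappa,\alpha-1}$ and $X_{\kappa,\alpha}\sqcup Y_{\kappa,\alpha}$ are stable together with $X\subseteq X_{\kappa,\alpha-1}$ and $Y\subseteq Y_{\kappa,\alpha}$. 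Since $X_{\kappa,\alpha-1}\sqcup Y_{\kappa,\alpha-1}$ maximizes~(\ref{eqn:parametric}) at $\lambda=\lambda_\kappa$ while $\tilde X\sqcup\tilde Y$ is only feasible, comparing objective values and cancelling the common terms leaves
\[
(1-\lambda_\kappa)\,r(X)+\lambda_\kappa\,c(Y)\ \le\ (1-\lambda_\kappa)\,r(I_{\kappa,\alpha}),
\]
and a rescaling using $(1-\lambda_\kappa)/\lambda_\kappa=C_\kappa/R_\kappa$ rewrites this as $\rho(X)+\gamma(Y)\le r(I_{\kappa,\alpha})/R_\kappa=\gamma(J_{\kappa,\alpha})$, which is the inequality condition of Theorem~\ref{thm:RS}~(2).

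For the tightness condition, suppose $\rho(X)+\gamma(Y)=\gamma(J_{\kappa,\alpha})$. Then every inequality above is an equality, so $\tilde X\sqcup\tilde Y$ also attains the maximum of~(\ref{eqn:parametric}) at $\lambda_\kappa$, i.e.\ $\tilde X\sqcup\tilde Y\in{\cal S}_\kappa$. Since $X_{\kappa,\alpha}\subseteq\tilde X\subseteq X_{\kappa,\alpha-1}$ and $Y_{\kappa,\alpha-1}\subseteq\tilde Y\subseteq Y_{\kappa,\alpha}$, the set $\tilde X\sqcup\tilde Y$ lies, in the distributive-lattice order on ${\cal S}_\kappa$, between the two consecutive chain elements $X_{\kappa,\alpha-1}\sqcup Y_{\kappa,\alpha-1}$ and $X_{\kappa,\alpha}\sqcup Y_{\kappa,\alpha}$. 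By maximality of the chain it must coincide with one of them; by~(\ref{eqn:partition2}) and disjointness this forces $(X,Y)=(I_{\kappa,\alpha},\emptyset)$ or $(X,Y)=(\emptyset,J_{\kappa,\alpha})$, and in both cases $B[I_{\kappa,\alpha}\setminus X,\,J_{\kappa,\alpha}\setminus Y]$ is a matrix with no rows or no columns, hence $O$. All three conditions of Theorem~\ref{thm:RS}~(2) hold, so $B$ is (exactly) $(\rho,\gamma)$-scalable.

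The only genuinely delicate point is the tightness case, where maximality of the chain is essential; everything else is bookkeeping with the critical parameter $\lambda_\kappa$ and the slope conventions, which is routine but easy to get sign-wrong. I would also double-check that the endpoints of the chain are the extreme stable sets bounding the coarse block $I_\kappa\sqcup J_\kappa$, so that the normalizing constants $R_\kappa,C_\kappa$ in the statement are the correct ones, since the derivation of $(1-\lambda_\kappa)/\lambda_\kappa=C_\kappa/R_\kappa$ relies on that identification.
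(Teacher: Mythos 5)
Your proof is correct and follows essentially the same route the paper sketches in the paragraph preceding the lemma: lift a stable set $X\sqcup Y$ of the block to the stable set $(X_{\kappa,\alpha}\cup X)\sqcup(Y_{\kappa,\alpha-1}\cup Y)$ of $G(A)$, compare parametric objective values at $\lambda_\kappa$ to obtain the inequality of Theorem~\ref{thm:RS}~(2), and invoke maximality of the chain in ${\cal S}_\kappa$ for the tightness condition. The paper leaves most of these details implicit (and its inline indexing $X_{\kappa,\alpha+1}\cup X$ looks like an off-by-one for your $X_{\kappa,\alpha}\cup X$, given~(\ref{eqn:partition2})); your write-up simply makes the same argument explicit, including the sanity check $\rho(I_{\kappa,\alpha})=\gamma(J_{\kappa,\alpha})$ forced by the common slope $R_\kappa/C_\kappa$.
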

Notice that $C_{\kappa}/R_{\kappa} = c(J_{\kappa,\alpha})/r(I_{\kappa,\alpha})$ for each $\alpha$. Again, all off-diagonal blocks in the limit 
$M^*[I_{\kappa},J_{\kappa}]$ must be zero, as in the argument after Theorem~\ref{thm:N*1}.
Thus we have the following.
\begin{Thm}\label{thm:limit}
	The Sinkhorn limit $(M^*,N^*)$ is given by
	\begin{equation}
	M^* = \bigoplus_{\kappa \in [\theta], \alpha \in [\alpha_{\kappa}]} R_{\kappa} B_{\kappa,\alpha}, \quad 
		N^* = \bigoplus_{\kappa \in [\theta], \alpha \in [\alpha_{\kappa}]} C_{\kappa} B_{\kappa,\alpha},
	\end{equation}
	where $B_{\kappa,\alpha}$ denotes the (unique) $(r^{\kappa,\alpha}/R_{\kappa}, c^{\kappa,\alpha}/C_{\kappa})$-scaling of $A[I_{\kappa,\alpha}, J_{\kappa,\alpha}]$.
\end{Thm}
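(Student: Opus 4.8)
The plan is to upgrade the block‑diagonalization of the Sinkhorn limit from the coarse partition $\{I_\kappa\}\times\{J_\kappa\}$, which is already in hand from Theorem~\ref{thm:N*1} and the discussion following it, to the refined partition $\{I_{\kappa,\alpha}\}\times\{J_{\kappa,\alpha}\}$, and then to identify each resulting diagonal block as $C_\kappa B_{\kappa,\alpha}$. First I would observe that $p^*=N^*{\bf 1}$ makes tight \emph{every} polymatroid inequality of (\ref{eqn:inequality_system}) indexed by the sets $X_{\kappa,\alpha}$, not merely those indexed by the extreme sets $X_\kappa$. Indeed, since $X_{\kappa,\alpha}$ is a union of whole blocks $I_{\kappa',\alpha'}$ of the refined partition and $\Gamma(X_{\kappa,\alpha})=[m]\setminus Y_{\kappa,\alpha}$ is the union of the $J_{\kappa',\alpha'}$ with the same indices, the identity $C_\kappa/R_\kappa=c(J_{\kappa,\alpha})/r(I_{\kappa,\alpha})$ gives $p^*(I_{\kappa',\alpha'})=(C_{\kappa'}/R_{\kappa'})r(I_{\kappa',\alpha'})=c(J_{\kappa',\alpha'})$ for each block, and summing yields $p^*(X_{\kappa,\alpha})=c(\Gamma(X_{\kappa,\alpha}))$. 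Now the mass‑counting argument from after Theorem~\ref{thm:N*1} applies verbatim to each stable set $X_{\kappa,\alpha}\sqcup Y_{\kappa,\alpha}$: stability gives $N^*[X_{\kappa,\alpha},Y_{\kappa,\alpha}]=O$, so all of the column mass $c(Y_{\kappa,\alpha})$ of $N^*$ on $Y_{\kappa,\alpha}$ comes from rows in $[n]\setminus X_{\kappa,\alpha}$; but those rows carry total $N^*$‑mass $p^*([n]\setminus X_{\kappa,\alpha})=C-c(\Gamma(X_{\kappa,\alpha}))=c(Y_{\kappa,\alpha})$, hence they put \emph{no} mass into columns outside $Y_{\kappa,\alpha}$, i.e.\ $N^*[[n]\setminus X_{\kappa,\alpha},[m]\setminus Y_{\kappa,\alpha}]=O$. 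Running $(\kappa,\alpha)$ over all blocks and invoking the chain structure $X_{\kappa,0}\supset X_{\kappa,1}\supset\cdots$ forces all off‑(refined‑)diagonal blocks of $N^*$ to vanish; hence $N^*=\bigoplus_{\kappa,\alpha}N^*[I_{\kappa,\alpha},J_{\kappa,\alpha}]$, and correspondingly $M^*=R(N^*)=\bigoplus_{\kappa,\alpha}M^*[I_{\kappa,\alpha},J_{\kappa,\alpha}]$.

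Next I would localize the variational characterization. As $(M^*,N^*)$ minimizes $D(M\|N)$ over ${\cal R}\times{\cal C}$ and lies in the subset of pairs block‑diagonal with respect to the refined partition, it also minimizes $D$ over that subset, on which the objective decouples as $D(M\|N)=\sum_{\kappa,\alpha}D(M[I_{\kappa,\alpha},J_{\kappa,\alpha}]\|N[I_{\kappa,\alpha},J_{\kappa,\alpha}])$, together with the constraints $M[I_{\kappa,\alpha},J_{\kappa,\alpha}]{\bf 1}=r^{\kappa,\alpha}$ and $N[I_{\kappa,\alpha},J_{\kappa,\alpha}]^\top{\bf 1}=c^{\kappa,\alpha}$. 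Thus each $(M^*[I_{\kappa,\alpha},J_{\kappa,\alpha}],N^*[I_{\kappa,\alpha},J_{\kappa,\alpha}])$ is the Sinkhorn limit of the instance $(A[I_{\kappa,\alpha},J_{\kappa,\alpha}],r^{\kappa,\alpha},c^{\kappa,\alpha})$; by Lemma~\ref{lem:D(r|p)}(2) its row marginal is the minimizer of $D(r^{\kappa,\alpha}\|\cdot)$ over $P(A[I_{\kappa,\alpha},J_{\kappa,\alpha}],c^{\kappa,\alpha})$. That minimizer is $(C_\kappa/R_\kappa)r^{\kappa,\alpha}$: this vector already minimizes $D(r^{\kappa,\alpha}\|\cdot)$ under the sole constraint that the coordinates sum to $c(J_{\kappa,\alpha})$, and it lies in the base polytope since $A[I_{\kappa,\alpha},J_{\kappa,\alpha}]$ is approximately $((C_\kappa/R_\kappa)r^{\kappa,\alpha},c^{\kappa,\alpha})$‑scalable (equivalently, $(r^{\kappa,\alpha}/R_\kappa,c^{\kappa,\alpha}/C_\kappa)$‑scalable), by the lemma just before the theorem.

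Finally I would identify this limit with $C_\kappa B_{\kappa,\alpha}$. Column‑normalization is invariant under scaling the rows, so the Sinkhorn sequence $N_k$ for the targets $(r^{\kappa,\alpha},c^{\kappa,\alpha})$ is literally the same as that for the equal‑mass targets $((C_\kappa/R_\kappa)r^{\kappa,\alpha},c^{\kappa,\alpha})$ (only the $M_k$ differ, by the constant factor $R_\kappa/C_\kappa$). For the latter targets $A[I_{\kappa,\alpha},J_{\kappa,\alpha}]$ is \emph{exactly} scalable, by the same lemma and Theorem~\ref{thm:RS}, so the Sinkhorn sequence converges to the unique exact scaling, which is $C_\kappa B_{\kappa,\alpha}$. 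Hence $N^*[I_{\kappa,\alpha},J_{\kappa,\alpha}]=C_\kappa B_{\kappa,\alpha}$, and then $M^*[I_{\kappa,\alpha},J_{\kappa,\alpha}]=R(C_\kappa B_{\kappa,\alpha})=R_\kappa B_{\kappa,\alpha}$; assembling over $(\kappa,\alpha)$ gives the stated formulas.

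The step I expect to be the main obstacle is the middle one: once $N^*$ is known to be refined‑block‑diagonal, one must argue both that it restricts blockwise to the Sinkhorn limit of each sub‑instance and that this restricted limit is the \emph{exact} scaling, not merely some matrix with the right marginals and possibly smaller support. The decoupling of the variational problem handles the former; the maximality of the chain $\{X_{\kappa,\alpha}\}$ — which through Theorem~\ref{thm:RS}(2) upgrades approximate scalability of $A[I_{\kappa,\alpha},J_{\kappa,\alpha}]$ to exact scalability — together with the uniqueness in Theorem~\ref{thm:RS}(3), handles the latter.
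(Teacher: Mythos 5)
Your overall architecture mirrors the paper's implicit argument: extend the mass-counting from the coarse partition $\{I_\kappa\}\times\{J_\kappa\}$ to the refined one $\{I_{\kappa,\alpha}\}\times\{J_{\kappa,\alpha}\}$ using tightness of the polymatroid inequalities at the refined sets, then localize and identify the blocks. Step 1 of your proof is correct and fills in the mass-counting detail that the paper only sketches.

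There is, however, a genuine gap at the step you yourself flag as the crux, and it is not closed by the decoupling of the variational problem as you claim. The decoupling shows only that each block pair $(M^*[I_{\kappa,\alpha},J_{\kappa,\alpha}],\,N^*[I_{\kappa,\alpha},J_{\kappa,\alpha}])$ is \emph{a} minimizer of the sub-instance's KL problem. But that minimizer is not unique: by Lemma~\ref{lem:D(r|p)}(1), the objective $D(R(N)\|N)=D(r^{\kappa,\alpha}\|N{\bf 1})$ depends on $N$ only through its row marginal, so \emph{every} $N$ in ${\cal C}_{\text{sub}}$ with $N{\bf 1}=(C_\kappa/R_\kappa)r^{\kappa,\alpha}$, together with $M=R(N)$, is a minimizer — and generically there are many such $N$ that are not scalings of $A[I_{\kappa,\alpha},J_{\kappa,\alpha}]$. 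Hence ``is a minimizer'' does not imply ``is the Sinkhorn limit of the sub-instance,'' and your subsequent invocation of Theorem~\ref{thm:RS}(3) is premature: uniqueness of the $(r,c)$-scaling pins the block down only once you already know the block \emph{is} a scaling of $A[I_{\kappa,\alpha},J_{\kappa,\alpha}]$, which is exactly what has not been shown (a limit of scalings $N_k=X_kAY_k$ need not be a scaling when the diagonal factors degenerate). Note also that the restricted iterates $N_k[I_{\kappa,\alpha},J_{\kappa,\alpha}]$ do \emph{not} form a Sinkhorn sequence for the sub-instance, because for finite $k$ the row-normalization of $N_k$ uses the full row marginal including mass in off-diagonal blocks; so your step 3 cannot be transferred to the block of the full limit by a straightforward ``same sequence'' argument. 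This missing piece — showing that the scaling factors, suitably renormalized per block, converge so that each diagonal block of $N^*$ is the I-projection of $N_0$'s block, hence the unique exact scaling — is precisely what the paper outsources to Aas~\cite{Aas2014} with the phrase ``implicit in,'' and it needs to be either cited or argued explicitly rather than attributed to the variational decoupling.
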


\paragraph{Relation to principal partition and DM-decomposition.}
The problem (\ref{eqn:parametric}) 
is a parametric submodular optimization, since it 
is also written as minimization of $X \mapsto t c(\Gamma(X)) -  (1-t)r(X)$.
Such a parametric problem has been studied in a general framework of 
the {\em principal partition} of polymatroids~\cite{Fujishige2009,TomizawaFujishige}.
The above nested distributive lattice structure and the associated partition 
are generalized to this setting, 
which has a number of applications in combinatorial optimization; 
see the above surveys.

In the case of $(r,c) = ({\bf 1}, {\bf 1})$, 
the block-triangularization of $A$ obtained from the chain 
$X_{\kappa,\alpha} \sqcup Y_{\kappa,\alpha}$
is a refinement of {\em Dulmage-Mendelsohn decomposition (DM-decomposition)}~\cite{DM1}; 
see \cite[Section 4.3]{LovaszPlummer} and \cite[Section 2.2.3]{MurotaMatrixMatroid}.
The DM-decomposition uses a chain of maximizers of (\ref{eqn:parametric}) 
only for $\lambda = 1/2$.
This refined DM-decomposition considering all parameters is due to  
N. Tomizawa (unpublished 1977); see \cite{TomizawaFujishige}.

\section{Finding Hall blockers by Sinkhorn iteration}\label{sec:finding}

Let $G = (U \sqcup V, E)$ be a bipartite graph 
with color classes $U,V$ and edge set $E$.
We assume that there is no isolated node. 
Suppose that $n:= |U| = |V|$.
The following is well-known:
	\begin{description}
		\item[{\rm (Hall's theorem)}]
	$G$ has a perfect matching if and only 
	if $|X| \leq |\Gamma(X)|$ for all $X \subseteq [n]$.
	\item [{\rm (K\H{o}nig-Egerv\'{a}ry theorem)}] The maximum cardinality of a matching in $G$
	is equal to the minimum of $|U| -|X| + |\Gamma(X)|$ over $X \subseteq U$.
	\end{description}

Let $(r,c) := ({\bf 1}, {\bf 1}) \in \RR^{n} \times \RR^{n}$.
Define $n \times n$ matrix $A_G$ by $(A_G)_{ij} = 1$ if $ij \in E$ and zero otherwise, 
where rows and columns of $A_G$
are indexed by $U$ and $V$, respectively.
In this setting, the Hall condition for $G$ is nothing but 
the approximate scalability condition in Theorem~\ref{thm:RS}~(1) for $A_G$ with  $(r,c) := ({\bf 1}, {\bf 1})$. 

In this section, we show that a polynomial number of iterations of the Sinkhorn algorithm identifies a Hall blocker if $G$ has no perfect matching ($A_G$ is not scalable).
To make clarify the roles of parameters,  instead of $A_G$
we apply the presented algorithms to 
a general nonnegative matrix $A$ with $G(A) = G$.
In addition to $A_{\rm min}$ in Corollary~\ref{cor:certificate}, 
define $A_{\rm max}$ 
by the maximum of entries $A_{ij}$ of $A$
and $m$ by the number of nonzero elements of $A$ (the number of edges of $G$).

\subsection{A Hall blocker from scaling vectors}
Our first algorithm is based on 
the geometric programming analysis in Section~\ref{subsec:geometric}.
Here the Sinkhorn iteration is described in update of scaling vectors $x,y$. 
\begin{description}
	\item [Algorithm: Sinkhorn \& Sorting Scaling Vectors]
	\item [0:]  Let $x  =y := {\bf 1}$, and let $y \leftarrow C(x)$.
	\item [1:] Repeat row- and column-normalization $\ell$ times:
	$$ x \leftarrow R(y), \quad y \leftarrow C(x).$$
	\item[2:]  
	Sort $x,y$ as $x_{i_1} \geq x_{i_2} \geq \cdots \geq x_{i_n}$,  
	$y_{j_1} \leq y_{j_2} \leq \cdots \leq y_{j_n}$.
	Choose index $k^*$ such that  $(x_{i_{k^*}}y_{j_{k^*}})^n \geq 
	x_1x_2\cdots x_n y_1 y_2 \cdots y_n$.
	Output $\{i_1,\ldots,i_{k^*}\}$.
\end{description}

\begin{Thm}\label{thm:main0}
	Suppose that $G$ has no perfect matching.
	For $\ell \geq (1/4) n^2 \log m A_{\rm max}/A_{\rm min}$, the output is a Hall blocker.
\end{Thm}
The recovering procedure from $x,y$ to a Hall blocker in step 2
is essentially the same as in \cite[Lemma 3.3]{FGS2022}.
The new point here is the iteration bound.
We remark that this bound (almost) matches the one by \cite{AWR2017,CK2021} for scalable case.  
Namely, 
$O((n^2/\epsilon^2) \log m A_{\rm max}/A_{\rm min})$ iterations 
find either a Hall blocker or an approximate $({\bf 1}, {\bf 1})$-scaling with $\ell_1$-error $\epsilon \in (0,2)$.

\begin{proof}
	Notice $\kappa ({\bf 1},{\bf 1}) = \|A\|_1 \leq m A_{\rm max}$.
	We compute the decrement of $\kappa$ in one iteration of step 1.
	Let $x' := R(y)$, $y' := C(x')$,  $B := (A_{ij}x_iy_j)$, and $B' := (A_{ij}x'_iy_j)$. 
	By Lemmas~\ref{lem:Pinsker}, \ref{lem:l_1}, and \ref{lem:decrement} with $B^{\top}{\bf1} = {\bf 1}$ and $B'{\bf1} = {\bf 1}$, we have
	\begin{eqnarray*}
	&& \log \kappa (x',y') - \log \kappa (x,y)  = - D({\bf 1} \| B{\bf 1})/n -  D({\bf 1} \| (B')^{\top}{\bf 1})/n \\
	&& \leq - \frac{1}{2n^2} \left\{ \| B{\bf 1} - {\bf 1} \|_1^2+ \| (B')^{\top}{\bf 1} - {\bf 1} \|_1^2 \right\} \\
	&& \leq - \frac{1}{2n^2} \left\{ ( \| B{\bf1} - {\bf 1} \|_1 +  \| B^{\top}{\bf1} - {\bf 1} \|_1)^2 
	+  (\| B'{\bf1} - {\bf 1} \|_1 +  \| (B')^{\top}{\bf1} - {\bf 1} \|_1)^2   \right\} \\
	&& \leq  - \frac{4}{n^2}.
	\end{eqnarray*}
	Therefore, after $\ell \geq (1/4) n^2 \log m A_{\rm max}/A_{\rm min}$ iterations, we have
	$\log \kappa (x,y) \leq \log \kappa ({\bf 1},{\bf 1}) - 4\ell/n^2 \leq 
	\log m A_{\rm max} -  \log mA_{\rm max}/A_{\rm min} \leq \log A_{\rm min}$.

	By Corollary~\ref{cor:certificate}, 
	it holds $x_iy_j < (x_1\cdots x_ny_1\cdots y_n)^{1/n}$ for each edge $ij$ in $G$. 
	Therefore, $G$ has no edge between $i_k$ and $j_{\ell}$ 
	for $k \leq k^*$, $\ell \geq k^*$, 
	since $x_{i_k}y_{j_\ell} \geq x_{i_{k^*}}y_{j_{k^*}} \geq (x_1\cdots x_ny_1\cdots y_n)^{1/n}$.
	Thus $\Gamma(\{i_1,i_2,\ldots,i_{k^*}\}) \subseteq \{j_1,j_2,\ldots,j_{k^*-1} \}$. 
	That is, the output $\{i_1,i_2,\ldots,i_{k^*}\}$ is a Hall blocker.
\end{proof}
\begin{Rem}\label{rem:general_marginals}
	The algorithm and its analysis can be adapted 
	for integer marginals $r,c$ with $R=C$. 
	In the step 2, choose indices $k^*,\ell^*$ such that 
	$$
	\left(\sum_{k=1}^{k^*-1}r(i_k),\sum_{k=1}^{k^*}r(i_k)\right) \cap 	\left(\sum_{\ell=1}^{\ell^*-1}c(j_{\ell}), \sum_{\ell=1}^{\ell^*}c(j_{\ell})\right) \neq \emptyset, \quad (x_{i_{k^*}}y_{j_{\ell^*}})^R \geq \prod_{i} x_i^{r_i} \prod_j y_j^{c_j}.
	$$
	Then $\{i_1,\ldots,i_{k^*}\}$ is a Hall blocker.
	The number of iterations is $(1/4)R^2 \log m A_{\rm max} /A_{\rm min}$.
	We omit the details. 
\end{Rem}

\subsection{Extreme Hall blockers from marginals}\label{subsec:finding_2}
Next we show that further iterations identify all extreme Hall blockers from marginal vector $p = A{\bf 1}$.
Here a subset $X$ is called {\em extreme} if 
$X \sqcup ([n] \setminus \Gamma(X))$ is an extreme stable set in Section~\ref{sec:limit}.
If $G$ has no perfect matching, then
extreme subsets other than trivial ones $\emptyset, [n]$ are all Hall blockers.
Particularly, they include a Hall blocker $X$ with maximum $|X| - |\Gamma(X)|$ and minimum (maximum) $|X|$.
The analysis is based on the KL-formulation 
in Section~\ref{subsec:KL} and Section~\ref{sec:limit}, 
where the Sinkhorn iteration is performed in the matrix update formulation. 
\begin{description}
	\item[Algorithm: Sinkhorn \& Sorting Marginals]
	\item[0:] $A \leftarrow C(A)$.
	\item[1:] Repeat row- and column-normalization $\ell$ times:
	$$A \leftarrow C(R(A)).$$
	\item[2:] Let $p := A{\bf 1}$, and sort $p$ as 
	$p_{i_1} \leq p_{i_2} \leq \cdots \leq p_{i_{n}}$.
	\item[3:] Output $U_k := \{i_1, i_{2},\ldots,i_k \}$ for $k=0, 1,2,\ldots,n$, where $U_0 := \emptyset$. 
\end{description}
\begin{Thm}\label{thm:main}
	Suppose that $G$ has no perfect matching.
	For $\ell > 16n^6 \log n A_{\rm max}/A_{\rm min}$, the output contains all extreme Hall blockers.
\end{Thm}

We start the proof. 
According to (\ref{eqn:alternating}), 
define the sequence $(M_k, N_k)$ with the initial point $N_0 := C(A)$.
Consider the family $\{X_{\kappa} \sqcup Y_{\kappa} \}_{\kappa=0,1,\ldots,\theta}$ of extreme stable sets and the associated
partitions $(I_1,I_2,\ldots,I_{\theta})$ and $(J_1,J_2,\ldots,J_{\theta})$ 
in (\ref{eqn:partition}). 
Then the limit $p^* = N^*{\bf 1}$ is given by 
\[
p^{*\kappa} = \frac{|J_{\kappa}|}{|I_{\kappa}|} {\bf 1} \quad (\kappa = 1,2,\ldots,\theta).
\]
Observe from $|J_{1}|/|I_{1}| > |J_{2}|/|I_{2}| > \cdots > |J_{\theta}|/|I_{\theta}|$ 
that sorting $p^*$ recovers all $X_{\kappa}$.

In the step 2 of the algorithm, the matrix $A$ equals $N_\ell$.
Let $p := A{\bf 1}$. 
\begin{Lem}\label{lem:1/2n^2}
	If $\|p^* - p \|_{\infty} < 1/(2n^2)$, then the output contains $X_{\kappa}$ 
	for all $\kappa = 0,1,2,\ldots, \theta$.
\end{Lem}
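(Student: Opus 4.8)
The plan is to show that when $\|p^* - p\|_\infty < 1/(2n^2)$, the sorted order of $p$ is compatible with the sorted order of $p^*$ in the sense that it cuts at every prefix corresponding to an $X_\kappa$. First I would observe that $p^*$ takes the constant value $|J_\kappa|/|I_\kappa|$ on each block $I_\kappa$, and that these values are strictly decreasing in $\kappa$ by Lemma~\ref{lem:DM}~(2) (here $R_\kappa = |I_\kappa|$, $C_\kappa = |J_\kappa|$). The key quantitative step is to lower-bound the gap between consecutive distinct values of $p^*$: for $\kappa < \kappa+1$ we have
\[
\frac{|J_\kappa|}{|I_\kappa|} - \frac{|J_{\kappa+1}|}{|I_{\kappa+1}|} = \frac{|J_\kappa|\,|I_{\kappa+1}| - |J_{\kappa+1}|\,|I_\kappa|}{|I_\kappa|\,|I_{\kappa+1}|} \geq \frac{1}{|I_\kappa|\,|I_{\kappa+1}|} \geq \frac{1}{n^2},
\]
since the numerator is a positive integer and $|I_\kappa|, |I_{\kappa+1}| \le n_1 \le n$. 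So distinct coordinate values of $p^*$ are separated by at least $1/n^2$.

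Next I would use the perturbation bound: since $\|p - p^*\|_\infty < 1/(2n^2)$, for each $i \in I_\kappa$ we have $p_i \in (p^*_i - 1/(2n^2),\, p^*_i + 1/(2n^2))$, i.e. $p_i$ lies in an open interval of radius $1/(2n^2)$ around $|J_\kappa|/|I_\kappa|$. Because consecutive block-values of $p^*$ differ by at least $1/n^2$, these intervals for different $\kappa$ are disjoint, and moreover every coordinate in block $I_\kappa$ has a strictly larger $p$-value than every coordinate in block $I_{\kappa+1}$ (the upper end of the lower interval, $|J_{\kappa+1}|/|I_{\kappa+1}| + 1/(2n^2)$, is strictly below the lower end of the upper interval, $|J_\kappa|/|I_\kappa| - 1/(2n^2)$). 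Consequently any sorting $p_{i_1} \le p_{i_2} \le \cdots \le p_{i_{n_1}}$ in increasing order must list all of $I_\theta$ first, then all of $I_{\theta-1}$, and so on, up to $I_1$ last. Therefore, for each $\kappa$, the prefix consisting of the first $|I_{\theta}| + |I_{\theta-1}| + \cdots + |I_{\kappa+1}| = n_1 - |X_\kappa|$ sorted indices is exactly $[n_1]\setminus X_\kappa = \{i : i \notin X_\kappa\}$. Wait—I must be careful with orientation: recall $X_\kappa = I_{\kappa+1} \cup I_{\kappa+2} \cup \cdots \cup I_\theta$ and $p^*$ is \emph{smaller} on blocks with larger index, so the first $|X_\kappa|$ elements in increasing order are precisely the elements of $X_\kappa$. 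Hence $U_{|X_\kappa|} = \{i_1,\dots,i_{|X_\kappa|}\} = X_\kappa$ is among the outputs, for every $\kappa = 0,1,\dots,\theta$.

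The only subtlety—not really an obstacle, but the point that needs care—is tie-breaking: within a single block $I_\kappa$ the values $p_i$ need not be equal, so the sorting orders them in some arbitrary way, but since the whole block sits in an interval disjoint from all other blocks' intervals, this internal ordering is irrelevant to which prefixes arise; every prefix that ends at a block boundary is realized regardless of how ties (or near-ties) inside blocks are broken. I would package the disjointness-of-intervals argument into a one-line inequality and then conclude by matching prefix lengths to $|X_\kappa|$. This establishes the lemma.
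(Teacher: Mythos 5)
Your argument is correct and follows essentially the same route as the paper's proof: show that the gap between distinct block values of $p^*$ is at least $1/n^2$, so that the $\infty$-norm bound $1/(2n^2)$ forces $p_i > p_j$ whenever $i \in I_\kappa$, $j \in I_\lambda$ with $\kappa < \lambda$, and conclude that the sorted prefixes of lengths $|X_\kappa|$ are exactly the sets $X_\kappa$. You supply two small details the paper leaves implicit: the integrality argument showing the numerator $|J_\kappa|\,|I_{\kappa+1}| - |J_{\kappa+1}|\,|I_\kappa|$ is a positive integer (hence the gap bound), and the careful reduction from ``blocks are ordered'' to ``every prefix $X_\kappa$ appears.'' One cosmetic point: when you assert the upper end of the lower interval is \emph{strictly} below the lower end of the upper interval, that would fail if the gap were exactly $1/n^2$; what rescues the argument is that the intervals are open so the actual values are strictly separated (and in fact $|I_\kappa|+|I_{\kappa+1}|\le n_1$ gives a gap of at least $4/n_1^2$, so the stricter statement holds too). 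This does not affect correctness.
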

\begin{proof}
	If $p_i > p_j$ holds for all $(i,j) \in I_{\kappa} \times I_{\lambda}$ with $\kappa < \lambda$, 
	then the output contains all $X_{\kappa}$.
	Since $|J_{\kappa}|/|I_{\kappa}| - |J_{\lambda}|/|I_{\lambda}| \geq 1/n^2$ for $\kappa < \lambda$, 
	for $i \in I_{\kappa}, j \in I_{\lambda}$, we have
	$p_i - p_j > p_i^* - 1/(2n^2) - (p_j^* + 1/(2n^2)) > 1/n^2 - 1/n^2 =0$, as required. 
\end{proof}

\begin{Lem}\label{lem:nlogn}
	$D(N^* \| M_0) \leq 2 n \log n A_{\rm max}/A_{\rm min}$.
\end{Lem}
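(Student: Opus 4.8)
The plan is to bound $D(M^* \| N_0)$ by analyzing each diagonal block of $M^*$ separately, using the block-diagonal structure established in Theorem~\ref{thm:limit}. Recall $N_0 = C(A_G)$ is the column-normalization of the $0,1$-matrix $A_G$, so each nonzero entry $(N_0)_{ij}$ equals $1/|\Gamma^{-1}(j)|$ where $\Gamma^{-1}(j)$ is the set of rows adjacent to column $j$; in particular $(N_0)_{ij} \ge 1/n_1 \ge 1/n$ whenever $A_{ij} > 0$. Since $M^*$ is supported on the same pattern as $A_G$ and has row sums equal to $\mathbf{1}$, we get $D(M^* \| N_0) = \sum_{i,j} M^*_{ij}\log\bigl(M^*_{ij}/(N_0)_{ij}\bigr) \le \sum_{i,j} M^*_{ij}\log\bigl(M^*_{ij}\cdot n\bigr) = \sum_{i,j} M^*_{ij}\log M^*_{ij} + \log n \cdot \sum_{i,j} M^*_{ij}$. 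The second term is $n_1 \log n \le n\log n$ since $M^*\mathbf{1} = \mathbf{1}$ and there are $n_1$ rows.

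First I would handle the entropy-like term $\sum_{i,j} M^*_{ij}\log M^*_{ij}$. Each $M^*_{ij} \le 1$ (its row sums to $1$ and entries are nonnegative), so $M^*_{ij}\log M^*_{ij} \le 0$, and thus this whole sum is $\le 0$. Combining, $D(M^* \| N_0) \le 0 + n\log n = n \log n$, which is exactly the claimed bound. So the argument is in fact quite short: it only uses that $N_0$'s positive entries are bounded below by $1/n$ and that $M^*$ is row-stochastic with the same support.

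The one point requiring a little care is the support/zero-entry bookkeeping: I must confirm that $M^*_{ij} = 0$ exactly when $(N_0)_{ij} = 0$ (equivalently when $(A_G)_{ij}=0$), so that no term of the form $x\log x/0$ with $x>0$ arises, which would make $D(M^*\|N_0) = \infty$. This follows because $M^* \in \mathcal{M}$ by construction (all matrices in the Sinkhorn sequence, and hence their limit, have the nonzero pattern contained in that of $A_G$), and the KL-divergence convention sets $0\log 0/0 = 0$. So the divergence is finite and the chain of inequalities above is valid.

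The main (very mild) obstacle is simply making the lower bound $(N_0)_{ij} \ge 1/n$ precise: after column-normalization each positive entry of $N_0$ is $1/d_j$ where $d_j \le n_1 < n$ is the degree of column vertex $j$ in $G$, so $(N_0)_{ij} \ge 1/n_1 \ge 1/n$. Everything else is a one-line application of $M^*_{ij}\log M^*_{ij}\le 0$ and $\sum_j M^*_{ij} = 1$ summed over the $n_1$ rows. No deeper structural input (Theorem~\ref{thm:limit}, the polymatroid analysis) is actually needed for this particular lemma, though it would also yield the same bound block-by-block if one preferred that route.
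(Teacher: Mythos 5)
Your proof is correct and uses essentially the same argument as the paper: lower-bound each positive entry of $N_0 = C(A_G)$ by $1/\deg(j) \ge 1/n$, split $D(M^*\|N_0)$ into the nonpositive entropy term $\sum M^*_{ij}\log M^*_{ij}$ (using $M^*_{ij}\le 1$) and the term $\sum M^*_{ij}\log\deg(j) \le n_1\log n \le n\log n$. You also correctly note that Theorem~\ref{thm:limit} is not needed, which matches the paper.
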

\begin{proof}
	For  $q_j := \sum_{k} A_{kj}$ and $p_i := \sum_j C(A)_{ij}$, we have
	\begin{eqnarray*} 
		D(N^* \| M_0) &=& D(N^* \| R(C(A))) = \sum_{ij \in E} N^{*}_{ij} \log N^*_{ij}/(A_{ij}/p_iq_j) \\
		&=& \sum_{ij \in E} N^{*}_{ij} \log N^*_{ij}p_i + \sum_{ij \in E} N^{*}_{ij} \log \frac{\sum_k A_{kj}}{A_{ij}} \\
		&\leq & n \log n + n \log n A_{\rm max}/A_{\rm min} \leq 2 n \log n A_{\rm max}/A_{\rm min}, 
\end{eqnarray*}
where we use $N_{ij}^* \leq 1$, $p_i \leq n$, and $\sum_{ij} N_{ij}^* = n$.
\end{proof}
%

\begin{proof}[Proof of Theorem~\ref{thm:main}]
By Lemma~\ref{lem:1/2n^2}, 
it suffices to show that $\|p^* - p \|_{\infty} < 1/(2n^2)$ holds in step 2.
For $\ell > 16 n^6 \log n A_{\rm max}/A_{\rm min}$,
by Lemmas~\ref{lem:sublinear2} and \ref{lem:nlogn}, we have
\begin{eqnarray*}
	&& \|p_{\ell} - p^*\|_{\infty} \leq \|p_\ell - p^* \|_1 
	\leq \sqrt{\frac{4n^2 \log n A_{\rm max}/A_{\rm min}}{\ell}} < \frac{1}{2n^2}.
\end{eqnarray*}
\end{proof}
\begin{Rem}
	The algorithm is also adapted for general integer marginals $r,c$ with $R=C$.
	In step 2, define $\tilde p$ by $\tilde p_i := p_i/r_i$, and sort $\tilde p$ instead of $p$.
	The iteration bound is obtained simply by replacing $n^6$ with $R^6$.
\end{Rem}

\section*{Acknowledgments}
We thank the referees for helpful comments, 
Satoru Fujishige for bibliographical information on principal partition,
and Masahito Hayashi for discussion on alternating minimization.
The first author was supported by Grant-in-Aid for JSPS Research Fellow, Grant No. JP19J22605, Japan.
The second author was supported by JST PRESTO Grant Number JPMJPR192A, Japan.

\normalsize
\appendix
\section{Appendix}
\paragraph{Proof of Lemma~\ref{lem:alternating}.}
	(1). Let  $p_i := \sum_{j} N_{ij}$. For any $M \in {\cal R}$, we have
	\begin{eqnarray*}
		&& D(N\|M)= \sum_{i,j} N_{ij} \log N_{ij}/M_{ij} \geq \sum_{i,j} N_{ij} 
		\log \sum_{j} N_{ij}/ \sum_j M_{ij} \\ 
		&& = \sum_{i,j} N_{ij} \log p_i / r_i = \sum_{i,j} N_{ij} \log N_{ij}/R(N)_{ij} = D(N \| R(N)), 
	\end{eqnarray*}	
	where we use the convexity of $D$ 
	and $R(N)_{ij} =(r_i/p_i) N_{ij}$.
	
	(2). Let $q_j := \sum_{i} M_{ij}$. For any $N \in {\cal C}$, we have
	\begin{eqnarray*}
		&& D(N \| M) \geq \sum_{i,j} N_{ij} 
		\log \sum_{i} N_{ij}/ \sum_i M_{ij} \\
		&& = \sum_j c_j \log c_j/q_j = \sum_{i,j} C(M)_{ij} \log C(M)_{ij}/M_{ij} = D(C(M) \| M), 
	\end{eqnarray*} 
	where $C(M)_{ij} =(c_j/q_j) M_{ij}$.

\paragraph{Proof of Proposition~\ref{prop:5point}.}
	The 5-point property is obtained by adding the 3-point and 4-point properties. 
     The 3-point property follows from
	\begin{eqnarray*}
		&& D(N \| M_{k-1}) = D(N \| N_k) + \sum_{i,j} N_{ij} \log \frac{(N_k)_{ij}}{(M_{k-1})_{ij}} = D(N \| N_k) + D(N_k \| M_{k-1}), \\
	\end{eqnarray*}	
where the second equality follows from the fact that
    $(N_k)_{ij}/(M_{k-1})_{ij} = ((M_{k-1})^{\top}{\bf 1})_j/c_j$ does not depends on $i$, and 
    $\sum_i N_{ij} = \sum_{i} (N_k)_{ij} = c_j$.
    
	The 4-point property follows from  
	\begin{eqnarray*}
	&& D(N \| M_{k}) - D(N \| N_k) = \sum_{i, j} N_{ij} \log \frac{(N_k)_{ij}}{(M_{k})_{ij}} = \sum_{i} \sum_j N_{ij} \log \frac{(N_k)_{ij}}{R(N_k)_{ij}} \\
	&& = \sum_{i} (N{\bf 1})_i \log \frac{(N_k {\bf 1})_i}{(M{\bf 1})_i} = D(N{\bf 1} \| M{\bf 1}) - D(N{\bf 1} \| N_k{\bf 1})  \\
	&& \leq D(N{\bf 1} \| M{\bf 1}) \leq D(N \| M),
	\end{eqnarray*}
	where the final inequality is the log-sum inequality.

\begin{thebibliography}{1}
\small
\bibitem{Aas2014}
E. Aas, Limit points of the iterative scaling procedure. 
{\em Annals of Operations Research} {\bf 215} (2014), 15--23. 

\bibitem{AWR2017}
J. M. Altschuler, J. Weed, and P. Rigolle.
Near-linear time approximation algorithms for optimal transport via
Sinkhorn iteration. 
{\em Advances in Neural Information Processing Systems} {\bf 30} (2017), 1964--1974.

\bibitem{Beck_FirstOrderMethods}
A. Beck, {\em First-Order Methods in Optimization.} Society for Industrial and Applied Mathematics, 2017.

\bibitem{BJVH2007}
S. Boyd, S.-J. Kim, L. Vandenberghe, and A. Hassibi, 
A tutorial on geometric programming.
{\em Optimization and Engineering} {\bf  8} (2007), 67--127.

\bibitem{BLNW2020}
P. B{\"{u}}rgisser, Y. Li, H. Nieuwboer, and M. Walter,
Interior-point methods for unconstrained geometric programming and scaling problems.
{\tt arXiv:2008.12110}, 2020.

\bibitem{Bregman1967}
L. Bregman, 
The relaxation method of finding the common point of convex sets and its application to
the solution of problems in convex programming. 
{\em USSR Computational Mathematics and Mathematical Physics} {\bf 7} (1967) 200--217.

\bibitem{CK2021}
D. Chakrabarty and S. Khanna, 
Better and simpler error analysis of the Sinkhorn–Knopp algorithm 
for matrix scaling. 
{\it Mathematical Programming, Series A} {\bf 188} (2021), 395--407. 

\bibitem{InformationTheory}
T. M. Cover and T. J. Thomas, {\em Elements of Information Theory}. 
Jon Wiley \& Son, Hoboken NJ, 2006.
 
\bibitem{Csiszar1975}
I. Csisz\'{a}r, 
I-divergence geometry of probability distributions and minimization problems. 
{\em The Annals of Probability} {\bf 3} (1975) 146--158.

\bibitem{CsiszarTusnady1984}
I. Csisz\'{a}r and G. Tusn\'{a}dy,
Information geometry and alternating minimization procedures.
{\it Statistics \& Decisions}, Supplement Issue No.1, (1984), 205--237.

\bibitem{DM1}
A. L. Dulmage and N. S. Mendelsohn, 
Coverings of bipartite graphs. 
{\em Canadian Journal of Mathematics} {\bf 10} (1958), 517--534.



\bibitem{Franks2018}
C. Franks, Operator scaling with specified marginals, 
preprint, arXiv:1801.01412, (2018), the conference version in STOC 2018.

\bibitem{FGS2022}
C. Franks, T. Soma, M. X. Goemans,
Shrunk subspaces via operator Sinkhorn iteration, 
arXiv:2207.08311, (2022), the conference version in SODA 2023.


\bibitem{FujiBook}
S. Fujishige,
{\it Submodular Functions and Optimization, 2nd Edition.} 
Elsevier, Amsterdam, 2005.

\bibitem{Fujishige2009}
S. Fujishige, Theory of principal partitions revisited. In: 
W. J. Cook,  L. Lov\'{a}sz, and J. Vygen, eds,
{\it Research Trends in Combinatorial Optimization},  
Springer, Berlin, 2009, pp.127--162.

\bibitem{GGT}
G. Gallo, M. D. Grigoriadis, and R. E. Tarjan,
A fast parametric maximum flow algorithm and applications.
{\em SIAM Journal on Computing} {\bf 18} (1989), 30--55.


\bibitem{GGOW}
A. Garg, L. Gurvits, R. Oliveira, and A. Wigderson,
Operator scaling: theory and applications. 
{\em Foundations of Computational Mathematics} {\bf 20} (2020), 223--290.

\bibitem{OperatorScaling_survey}
A. Garg and R. Oliveira,
Recent progress on scaling algorithms and applications.
arXiv:1808.09669, (2018).


\bibitem{GietlReffel}
C. Gietl and F. P. Reffel,  Accumulation points of the iterative proportional fitting procedure. {\it Metrika} {\bf 76} (2013), 783--798.

\bibitem{Gurvits04}
L. Gurvits, Classical complexity and quantum entanglement. 
{\em Journal of Computer and System Sciences}  {\bf 69} (2004), 448--484. 

\bibitem{GurvitsYianilos1998}
L. Gurvits and P. N. Yianilos, 
The deflation–inflation method for certain semidefinite programming 
and maximum determinant completion problems. 
Technical report, NEC Research Institute, (1998).



\bibitem{Idel_survey}
M. Idel, A review of matrix scaling and Sinkhorn's normal form for matrices and positive maps, arXiv:1609.06349, (2016).

\bibitem{LGW00}
N. Linial, A. Samorodnitsky, and A. Wigderson, A deterministic strongly polynomial algorithm for matrix scaling and approximate permanents. {\em Combinatorica} {\bf 20} (2000), 545--568. 

\bibitem{LovaszPlummer}
L. Lov\'asz and M. Plummer, {\em Matching Theory}.
North-Holland, Amsterdam, 1986.


\bibitem{MatsudaSoma2020}
T. Matsuda and T. Soma, Information geometry of operator scaling.
{\it Linear Algebra and Its Applications} {\bf 649} (2022), 240--267.

\bibitem{MurotaMatrixMatroid}
K. Murota, {\em Matrices and Matroids for Systems Analysis.}
Springer-Verlag, Berlin, 2000.



\bibitem{RothblumSchneider}
U. G. Rothblum and H. Schneider, 
Scalings of matrices which have prespecified row sums and column sums via optimization. 
{\em Linear Algebra and Its Applications} {\bf 114/115} (1989), 737--764.

\bibitem{Sinkhorn64}
R. Sinkhorn, A relationship 
between arbitrary positive matrices and
doubly stochastic matrices.
{\em Annal of Mathematics Statistics} {\bf 35} (1964), 876--879.

\bibitem{SinkhornKnopp67}
R. Sinkhorn and P. Knopp, 
Concerning nonnegative matrices and doubly stochastic matrices.
{\em Pacific Journal of Mathematics} {\bf 21} (1967), 343--348.


\bibitem{TomizawaFujishige}
N. Tomizawa and S. Fujishige,
Historical survey of extension of the concept of principal partition and 
their unifying generalization ot hypermatroids.
System Science Research Report No. 5, Department of Systems Science, 
Graduate School of Science and Engineering, Tokyo Institute of Technology, 1982.

\end{thebibliography}
\end{document}